\newtheorem{lem}{Lemma}
\newtheorem{cor}{Corollary}
\newtheorem{thm}{Theorem}
\newtheorem{prop}{Proposition}
\def\half{\mbox{$\frac{1}{2}$}}
\def\boldalpha{\mbox{\boldmath $\alpha$}}
\def\0{\mathbf{0}}
\def\1{\mathbf{1}}
\def\eps{\epsilon}
\begin{document}

\title{Algorithms and Complexity of Range Clustering}

\author{Dorit S. Hochbaum
 }
\thanks{Department of Industrial Engineering and Operations Research,
University of California, Berkeley, CA 94720
(hochbaum@ieor.berkeley.edu). This author's research was supported
in part by NSF award No. CMMI-1200592.
}




\maketitle

\begin{abstract}
We introduce a novel criterion in clustering that seeks clusters with limited \emph{range} of values associated with each cluster's elements. In clustering or classification the objective is to partition a set of objects into subsets, called clusters or classes, consisting of similar objects so
that different clusters are as dissimilar as possible. We propose a number of objective functions that employ the range of the clusters as part of the objective function.  Several of the proposed objectives mimic objectives based on sums of similarities. These objective functions are motivated by image segmentation problems, where the
diameter, or range of values associated with objects in each cluster, should be small. It is
demonstrated that range-based problems are in general easier, in terms of their
complexity, than the analogous similarity-sum problems.  Several of the
problems we present could therefore be viable alternatives to existing
clustering problems which are NP-hard, offering the advantage of efficient algorithms.\\
{\bf Keywords:}{ clustering, range, image segmentation, complexity, minimum cut, partitioning}
\end{abstract}

\pagestyle{myheadings} \thispagestyle{plain} \markboth{D. S. Hochbaum
}{Range Clustering}
\section{Introduction}
The typical clustering and classification problem is to partition a set of objects into subsets, called clusters, so that each subset consists of ``similar" elements, and the different clusters are as dissimilar as possible. We introduce novel clustering criteria that seek clusters with limited \emph{range} of values associated with each cluster's objects (or elements).  Each of the objects to be clustered has a scalar value associated with it and the {\em range} of a cluster is the difference between the maximum and the minimum values of the elements in the cluster.   We show that for a partition to $k$ clusters the problem of minimizing the maximum range of a cluster and the problem of the ranges of the clusters are solvable in polynomial time. Other problems explored here are the $k$-normalized range sum, the $k$-range cut and the $k$ normalized range cut, and we provide particularly efficient algorithms for the first two and demonstrate that the last one is NP-complete.

A good model for separation between clusters is the minimum cut, applied to the
graph with nodes representing objects, and edges between pairs of nodes
associated with weights of the similarity between the respective nodes.  The
minimum $2$-cut problem is to find a partition of the graph into two
nonempty components so that the sum of similarity weights on edges with one
endpoint in one cluster and the second endpoint in the other cluster, is
minimum.  Thus the similarity between the two resulting clusters of the bipartition is minimum.
This notion of cut extends to multiple clusters, as in the minimum $k$-cut
problem where the objective is to find a partition into $k$ nonempty components
so that the sum of weights of edges with endpoints in different components, or the inter-similarity, is
minimum \cite{GH}. The concept of a cut plays an important role in image
segmentation problems, where the goal is to partition the image into meaningful
objects.  The problems of $k$-range cut and $k$ normalized range cut combine the range objective with a form of the cut objective.  The problems' formulations are given in Tables \ref{tb:2-formulate} and \ref{tb:k-formulate}, and the complexity results and run times of the algorithms are given in Table \ref{tb:res}.

The motivation for our study of range problems originates in image segmentation. In a typical image segmentation set-up there are similarity weights assigned to each pair of {\em adjacent} pixels (which are the objects for the image
clustering/segmentation), \cite{Hoc01,SM}.  However, it is often the case that
each pixel has in addition some scalar value associated with it, such as its color
intensity, or its texture (computed with respect to a neighborhood of the pixel).  The goal
in this type of clustering is then not only to have the pixels similar to each other, but also to
have the scalar values associated with pixels in the same cluster close enough to each other.  This is the case, for instance, in segmenting knee cartilage as in \cite{HFBG}, where pixels of cartilage tissue have a distinctive texture.  The
goal then is to ensure that all pixels 
within each segmented object have only a limited variability in their range of
texture values.


We devise a family of {\em range-based} clustering problems that are analogous to commonly
considered goals in clustering. We demonstrate that, in general, range problems are easier to solve (in terms of complexity) than their respective total similarity problems. One such problem is the NP-hard \emph{normalized cut} problem \cite{SM}, defined below in Equation \ref{eq:normcut}.   In contrast, we demonstrate here that an analogous range objective problem similar to normalized cut is polynomial-time solvable.

The term range-clustering was previously used in \cite{phan15}.   However, the context there is
to provide an improved computation of similarities, rather than to generate meaningful clusters as is the case here.  We believe that here is the first time that the concept of range is utilized as a clustering criterion.

To formalize the discussion and problem definitions, we introduce relevant graph notation and other preliminaries.

\subsection{Notations and preliminaries}\label{ref:prelims}
Let $G=(V,E)$ be an undirected graph where the nodes of the graph correspond to elements (also referred to as objects) to be clustered.  There are edge weights $w_{ij}$ associated with each edge $[i,j]\in E$ representing the ``similarity" of nodes $i$ and $j$.  A similarity weight is in turn equivalent to
the penalty of not assigning the respective pair of elements $v_i$ and $v_j$ to the same cluster. Higher similarity is associated with higher weights. A bi-partition of a graph is called a {\em cut}, $(S,\bar{S})=\{ [i,j] |i\in S, j\in \bar{S} \}$, where $\bar{S}=V\setminus S $.   The {\em capacity of a cut} $(S,\bar{S})$ is the sum of the weights of the edges, with one endpoint in $S$ and the other in $\bar{S}$:
$
 C(S,\bar{S})= \sum _{i\in S, j\in \bar{S}, [i,j]\in E}w_{ij}.
$
A minimum capacity cut, or minimum cut, is a bipartition of the nodes into two non-empty sets $(S,\bar{S})$ that minimizes $C(S,\bar{S})$.  A bipartition resulting from a minimum cut has $S$ and $\bar{S}$ as dissimilar as possible in terms of the sum of similarities between their elements.


Two versions of the minimum cut problem are $2$-cut and $s,t$-cut. The former is the bipartition of $V$
into two non-empty sets, $S$ and its complement $\bar{S}$, $(S,\bar{S})$.  For designated nodes $s,t\in V$, $(S,\bar{S})$ is said to be an $s,t$-cut, if
$s\in S$ and $t\in \bar{S}$.  A minimum $2$-cut or $s,t$-cut are those bipartitions that minimize $C(S,\bar{S})$.  When there is no ambiguity, we will simply refer to the {\em min-cut}.

A partition of the set of elements, $V$, into more than two
sets, $k$-partition, is a collection of $k \geq 3$ disjoint non-empty sets
${\{ S_i \}}_{i=1}^k$ so that $\cup_{i=1}^k S_i = V$.
A $k$-partition is denoted by $(S_1,\ldots,S_k)$.  We refer to a partition into $k$ clusters as {\em $k$-clustering}.

The {\em weighted degree} of node $i$ is the sum of weights adjacent to $i$,
$d_i= \sum _{j|[i,j]\in E}w_{ij}$.  The weight of a subset of nodes $B \subseteq V$, referred to as the \emph{volume} of $B$, is the sum of weighted degrees of nodes in $B$, $d(B)=\sum_{i:v_i \in B} d_i$. 

The concept of {\em shrinking} of nodes is utilized here.  Consider a graph $G=(V,E)$  with edge weights $w_{ij}$ associated with each edge $[i,j]\in E$, and a specific pair of nodes $p$ and $q$.  The process of shrinking node $p$ into $q$ is to remove node $p$ from the graph, and appending to node $q$ all edges formerly adjacent to $p$.  This can be done in two equivalent methods.  In the first, we let $N(p)= \{ v | [p,v]\in E\}$, remove the set of edges $\{ [p,v] |v\in N(p)\}$ and add the set of edges $\{ [q,v] |v\in N(p)\}$ with their respective weights. The second method is to simply add an edge $[p,q]$ of infinite capacity (weight).  This ensures that $p$ and $q$ are always together in the same set when the minimum cut criterion is used.

Input graphs for range-clustering include node weights.  These are distinct scalar values associated with the nodes of $V$,  $\{\alpha_1,\ldots,\alpha_n\}$, where $\alpha_i$ are rational numbers.  These scalars are necessarily rational and of finite number of significant digits (so the input is finite). The assumption of distinct scalar values is important for the algorithms presented here.  This is because equal values of the scalars may lead to an exponential number of equally valued solutions, all of which may have to be explored by the algorithms.  This assumption of distinct scalar values is however shown next to hold without loss of generality.

If there are equal valued scalars, a standard {\em perturbation process} is applied:  This is done by adding different powers of a small enough $\eps$ to each of the equal values.  (Such process is applied in linear programming algorithms to avoid degeneracy and cycling.)  For instance, the value of $\eps$ can be selected to be the smallest resolution of any $\alpha _i$.  That is, multiply all $\alpha _i$ by a large enough number, say $M$, so they are all integers, and then set $\eps =\half$.  For $\alpha _i=\alpha _j$, we set their perturbed values to $\tilde{\alpha _i }= \alpha _i +\eps ^i$ and $\tilde{\alpha _j }= \alpha _j +\eps ^j$.  The perturbed values are then all distinct, and it is easy to see that an optimal solution to the perturbed problem, in terms of the range, is one of the optimal solutions to the unperturbed problem.  If the problem involves similarity weights as well, the values of $\eps$ would depend on the smallest resolution among the scalar values of $\alpha$ as well as the values of $w_{ij}$.

Therefore, it is assumed without loss of generality that $\alpha_1<\alpha _2<\ldots <\alpha_n$.  We let $\alpha _i$ be associated with the element $v_i\in V$ so that $v_1$ is the element with the smallest value $\alpha_1$ and $v_n$ is the element with the largest value $\alpha_n$.


For any non-empty set $B\subseteq V$  the maximum, minimum and range of $B$ are defined
as:
\begin{eqnarray*}
max(B)&=&\max_{i:v_i\in B}\{\alpha_i\}\\
min(B)&=&\min_{i:v_i\in B}\{\alpha_i\}\\
range(B)&=&max(B)-min(B).
\end{eqnarray*}
Note that the range of an empty set is undefined, and is not relevant here.  This is because we are interested only in partitions of the set of elements into non-empty clusters of bounded range.  Hence, an empty cluster is not considered. Note also that the range of a singleton is $0$.

We say that a set $S$ is a {\em subset of an interval} $I=[a,b]$ if $S\subseteq \{j| a\leq \alpha _j \leq b\}$ and $min(S)=a$, $max(S)=b$.  We denote ``$S$ a subset of $I$" by $S\subseteq I$.

The complexity model used here is the real computation model which allows arithmetic
operations on real numbers, regardless of the number of significant digits, to count as a single operation, \cite{blumcomplexity}.

We next introduce our range-based clustering problems and review related known clustering objectives that utilize similarities.   Highlights of the differences in complexity between the range based and known clustering objective are discussed.

\subsection{Range-based clustering problems and related clustering problems utilizing cuts and similarities} \label{relwork}

We introduce here a new collection of range-based problems and discuss related known clustering problems that utilize cuts and similarities.
First we present $2$-clustering range problems.  The list of the names and formulations of the range-based clustering problems for bipartitioning problems is given in Table \ref{tb:2-formulate}.

The simplest case of bipartition range-based problems considered is the {\sf min range sum}.
This objective function seeks to minimize the range of $S$ and $\bar{S}$ simultaneously, and it is shown here to be solvable in polynomial time.  A weighted version of the problem is the {\sf min weighted range sum} which permits one to emphasize the limited range of $S$, more than that of $\bar{S}$.  The {\sf min weighted range sum} is also solved in polynomial time, as shown in Section \ref{Prob1}. The {\sf min-max range problem} is to minimize the bottleneck range between the two sets of the bipartition.

Many commonly used clustering models utilize the notion of minimum cut.  The input to such problems is a graph $G=(V,E)$ and similarity weights associated with the edges.  Bipartition clustering is to partition the set of elements $V$ into two
non-empty disjoint sets, $S$ and its complement $\bar{S}$.   The capacity, or weight, of the cut between $S$ and $\bar{S}$, $C(S,\bar{S})$, signifies the degree of similarity between $S$ and $\bar{S}$. To generate a set $S$ that is highly dissimilar to its complement one seeks a minimum cut partition to $S$ and $\bar{S}$ that minimizes $C(S,\bar{S})$.  This in turn also maximizes the total similarity within the two sets.

It has long been observed that a minimum cut in a graph with edge similarity weights tends to create a bipartition that has one side very small in size, containing a singleton in extreme cases \cite{WL}.  This is so because the number of edges between a single node and the rest of the graph tends to be much smaller than between two comparable-sized sets. To correct for such unbalanced bipartitions, Shi and Malik, in the context of image segmentation \cite{SM}, proposed the {\em normalized cut} as an alternative criterion to minimum cut. The normalized cut (NC) optimization problem is to find a bipartition of $V$, $(S,\bar{S})$, minimizing:
\begin{eqnarray}
{\mbox {  (NC)}}\ \ \min_{\varnothing
\subset S \subset V}  \ \frac{C(S,\bar{S})}{d(S)}+\frac{C(S,\bar{S})}{d(\bar{S})}.
\end{eqnarray} \label{eq:normcut}
The normalized cut problem (NC) was shown to be NP-hard in \cite{SM} by a reduction from set partitioning.  Because set partitioning is weakly NP-hard, this only proves that normalized cut is at least weakly NP-hard.  The problem is however strongly NP-hard with a reduction from the {\em balanced cut problem},  which is sketched below for an easier problem. The essence of the difficulty of NC derives from the fact that in the objective function of NC, the one ratio term with the smaller value of $d()$ is at least $\frac{1}{2}$ of the objective value. Therefore, this objective function drives the segment $S$ and its complement to be approximately of equal sizes.  Indeed, it is shown in \cite{Hoc10} that the problem of minimizing the first term of NC, $\min_{\varnothing
\subset S \subset V}  \ \frac{C(S,\bar{S})}{d(S)}$, is polynomial time solvable.

The following problem is also a form of normalizing the cut with respect to the size of the sets.  This problem is associated with finding the graph expander ratio and is known to be NP-hard.  It is referred to as {\em size-normalized cut}. 
\begin{eqnarray}
{\mbox {  (size-NC)}}\ \ \min_{\varnothing
\subset S \subset V}  \ \frac{C(S,\bar{S})}{|S|}+\frac{C(S,\bar{S})}{|\bar{S}|}
\end{eqnarray} \label{eq:expander}
Note that like NC, the objective function of size-NC drives the segment $S$ and its complement to be approximately of equal sizes.

The balanced cut problem is to find a bipartition cut of minimum weight such that both sets in the bipartition contain half of the nodes of the graph, (or $\min \{|S|, |\bar{S}|\} \leq c |V|$ for a constant $c\in (0,\half)$).  This problem is also known as {\em minimum cut into bounded sets}, proved NP-complete in \cite{GJS76}.  For an edge weighted graph $G$ of total sum of edge weights equal to $M$, we scale all the edge weights by $M$.  Since the numerator then is very small, an optimal solution is attained for half the nodes in source set and the other half in sink set, while minimizing the cut value, which is an optimal solution for the respective balanced cut problem.  Therefore the size-NC is a strongly NP-hard problem.

\begin{table} [htb]
\begin{center} 
\begin{tabular}{|l|l|} \hline
{\bf Problem Name} & {\bf Formulation}
\\ \hline \hline
{\sf min range sum} &  $\min_{\emptyset
\subset S \subset V}  \ range(S)+range(\bar{S})$  \\
{\sf min max range} & $\min_{\emptyset
\subset S \subset V}  \ \max \{ range(S),range(\bar{S}) \}$  \\
{\sf min normalized range sum} &  $\min_{\varnothing
\subset S \subset V}  \ \frac{range(S)}{f(|S|)}+\frac{range(\bar{S})}{f(|\bar{S}|)}$\\
{\sf min range cut} & $ \min_{\varnothing
\subset S \subset V}  \ range(S)+range(\bar{S})+C(S,\bar{S})$ \\
{\sf min normalized range cut}&  $\min_{\varnothing
\subset S \subset V}  \ \frac{range(S)}{f(|S|)}+\frac{range(\bar{S})}{f(|\bar{S}|)}+C(S,\bar{S})$\\
\hline \hline
\end{tabular}
\vspace{0.1in}\caption{Formulations of bipartition range clustering problems.} \label{tb:2-formulate}
\vspace{-0.11in}
\end{center}
\end{table}

The {\sf min normalized range sum} is a range-based objective analogous to NC and size-NC.
Unlike the normalized cut problem, this problem is shown here to be polynomial time solvable for $f(|S|)$ monotone non-decreasing in $|S|$.

For the next two range problems, {\sf min range cut} and {\sf min normalized range cut}, the input consists of a graph $G=(V,E)$, edge weights (similarities) $w_{ij}$ for all $[i,j]\in E$, and scalars $\alpha _i$ associated with each node $i\in V$.
The problems {\sf min range cut} and {\sf min normalized range cut} are generated by adding a minimum cut capacity term to the objective functions of {\sf min range cut} and {\sf min normalized range cut} respectively.
The {\sf min range cut} problem is shown here to be polynomial time solvable, whereas the {\sf min normalized range cut} is proved to be NP-hard, even for $f(|S|)=|S|$. Thus in the latter problem, the addition of the minimum cut term to the objective changes its status from a polynomial problem to an NP-hard problem.

A sample of $k$-clustering problems for $k\geq 3$ that utilize similarities include:
\begin{enumerate}
\item    The minimum $k$-cut problem seeks a $k$ partition so that the total
    similarity between all $k$ clusters is minimized.   This similarity is measure by the sum of weights of edges that have endpoints in different clusters.  The minimum $k$-cut
    problem is NP-hard, but can be solved in polynomial time for fixed $k$ \cite{GH}.  The minimum $2$-cut problem is a polynomial time solvable special case of the $k$-cut where $k=2$.


\item Clustering into $k$ subsets of constrained size:  Feo et al., in \cite{FGK}, considered this problem in the context of  VLSI design application and gave a heuristic solution.  This problem is NP-hard even
    for partition into {\em two} clusters. In this formulation, the requirement that the size of the sets is bounded, say by a fraction (such as half) of the total number
    of elements, makes the problem equivalent to the balanced cut problem, which is
    NP-hard.
\item
    The extension of normalized cut (NC) to $k$-clustering is $\min _{(S_1,S_2 \ldots S_k)} \sum _{i=1}^k \frac{ C(S_i,\bar{S_i})} { d(S_i) }$, which is obviously NP-hard as well.
\end{enumerate}
The respective formulations of $k$-partition range-based problems for $k\geq 3$, that generalize the $2$-clustering objectives, are presented in Table \ref{tb:k-formulate}.  The {\sf min $k$-normalized range sum} problem is analogous to the $k$-normalized cut.   The problems of {\sf min $k$-range cut} and {\sf min $k$-normalized range cut} are generated by adding the $k$-cut objective to the respective objectives of {\sf min $k$ range sum} and {\sf min $k$-normalized range sum} respectively.

\begin{table} [htb]
\begin{center} 
\begin{tabular}{|l|l|} \hline
{\bf Problem Name} & {\bf Formulation}
\\ \hline \hline
{\sf min $k$ range sum} &  $\min_{(S_1,\ldots,S_k)}  \ \sum_{i=1}^k range(S_i)$  \\
{\sf min max $k$ range} & $\min_{(S_1,\ldots,S_k)} \left( \max_{i\in \{1\ldots k\}} range(S_i)\right)$  \\
{\sf min $k$-normalized range sum} &  $\min_{(S_1,\ldots,S_k)}  \ \sum_{i=1}^k \frac{range(S_i)}{f(|S_i|)}$ \\
{\sf min $k$-range cut} & $\min_{(S_1,\ldots,S_k)}  \ \sum_{i=1}^k range(S_i) + \sum_{i=1}^{k-1} \sum_{j=i+1}^k C(S_i,S_j)$  \\
{\sf min $k$-normalized range cut}&  $\min_{(S_1,\ldots,S_k)}  \ \sum_{i=1}^k \frac{range(S_i)}{|S_i|} + \sum_{i=1}^{k-1} \sum_{j=i+1}^k C(S_i,S_j)$ \\
\hline \hline
\end{tabular}
\vspace{0.1in}\caption{Formulations of $k$-partition range clustering problems for $k\geq 3$.} \label{tb:k-formulate}
\vspace{-0.11in}
\end{center}
\end{table}

Among our results for $k$-clustering range problems, the {\sf min max k range}, {\sf min k range sum} and {\sf min k-normalized range sum} problems are shown to be polynomial-time solvable.  The {\sf min k-normalized range cut}
problem is proved to be NP-hard.  The {\sf min $k$ range cut} problem is polynomial for fixed $k$, but proved to be NP-hard for general $k$. The complexity results are summarized in Table \ref{tb:res} under the assumption that the input scalar values are given sorted. This assumption is made in order to highlight the fact that the running time for solving some of the problems is faster than that required to sort the values, $O(n\log n)$.

\begin{table} [htb]
\begin{center} 
\begin{tabular}{|l|l|l|} \hline
{\bf min problem} & {\bf Bipartition}, $k=2$ & {\bf $k$-Partition}, $k\geq 3$
\\ \hline \hline
{\sf $k$-range sum} &  Polynomial time, $O(n)$ &  Polynomial time, $O(n )$\\
{\sf max $k$-range }&  $O(\log n)$ &   Polynomial time, $O(n\log^3 n )$  \\
{\sf $k$-normalized range sum}$^*$  &  Polynomial time, $O(n)$&  Polynomial time, $O(n^2 k)$\\
{\sf $k$-range cut} &  Polynomial time, $O(mn^2 \log {\frac{n^2}{m}}))$ &  Polynomial $O(n^{k^2} )$ for $k$ fixed\\
 &   &  NP-complete for general $k$\\
 {\sf $k$-normalized range cut}&  NP-complete & NP-complete\\

\hline \hline
\end{tabular}
\vspace{0.1in}\caption{Summary of the range clustering problems and algorithmic results. $^*$$f(|S|)$ is monotone non-decreasing in $|S|$.} \label{tb:res}
\vspace{-0.11in}
\end{center}
\end{table}

\noindent
{\bf Paper overview:} The remainder of the paper is organized as follows:  Sections \ref{Prob1}, \ref{minmax} and \ref{normRange} present polynomial time algorithms solving the {\sf min range sum}, the {\sf min max range}, and the {\sf min normalized range sum} problems, respectively.  In Section \ref{Prob2} we describe a polynomial time algorithm solving the {\sf min range cut} problem which uses a parametric cut procedure.   In Section \ref{Prob3} the {\sf min normalized range cut} problem is proved to be NP-hard.  Section \ref{k-seg} provides algorithms and associated complexity results for the respective $k$-clustering problems for $k\geq 3$.
Concluding remarks are given in Section \ref{conc}.


\section{The minimum range sum problem}\label{Prob1}

We let $Z(S)$ be the objective value for the {\sf min range sum} problem with
the set $S$, and let $S^*$ be an optimal solution to the problem:
\begin{eqnarray*}
Z(S^*) = \min_{\varnothing \subset S \subset V} Z(S) =  \min_{\varnothing
\subset S \subset V} range(S) + range(\bar{S}).
\end{eqnarray*}
Using the definition of $range(\cdot)$, this problem may be written as:
\begin{eqnarray*}
\min_{\varnothing \subset S \subset V} max(S)-min(S)+max(\bar{S})-min(\bar{S})
 \label{objective1} \end{eqnarray*}
Recall that the nodes are indexed according to their respective values of $\alpha$, $\alpha _1<\ldots < \alpha _n$.  Therefore node $v_1$ is associated with $\alpha _1$ and node $v_n$ with $\alpha _n$.  Without loss of generality, we can assume $\alpha_1 \in S$; thus $\min(S) = \alpha_1$. The next lemma proves that in an optimal solution the element with the largest value,
$v_n$, belongs to $\bar{S}$:

\begin{lem}\label{lem:n_in_S_bar}
In every optimal solution to the {\sf min range sum} problem, $v_1 \in S$ and
$v_ n \in \bar{S}$.
\end{lem}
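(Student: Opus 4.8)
The plan is to argue by contradiction, exploiting the fact that placing the two extreme elements $v_1$ and $v_n$ in the same cluster forces that cluster to attain the largest conceivable range. Since the objective $range(S)+range(\bar{S})$ is symmetric in $S$ and $\bar{S}$, any optimal bipartition remains optimal after relabeling, so there is no loss in calling $S$ the part that contains $v_1$; thus $v_1\in S$ holds by convention and the content of the lemma is the claim $v_n\in\bar{S}$. Suppose, toward a contradiction, that some optimal solution also has $v_n\in S$. Then $\min(S)=\alpha_1$ and $\max(S)=\alpha_n$, so $range(S)=\alpha_n-\alpha_1$, which is the maximum possible range of any nonempty subset of $V$ (every $\alpha_i$ lies in $[\alpha_1,\alpha_n]$). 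Consequently $Z(S)=range(S)+range(\bar{S})\ge \alpha_n-\alpha_1$.

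The second step is to exhibit a strictly better feasible bipartition. Take $S'=\{v_1\}$ and $\bar{S}'=V\setminus\{v_1\}=\{v_2,\ldots,v_n\}$, which is a valid bipartition whenever $n\ge 2$ (the case $n\le 2$ being trivial, as both parts are then forced to be singletons). Here $range(S')=0$ and $range(\bar{S}')=\alpha_n-\alpha_2$, so $Z(S')=\alpha_n-\alpha_2$. Because the scalars are distinct and indexed in increasing order, $\alpha_2>\alpha_1$, hence $Z(S')=\alpha_n-\alpha_2<\alpha_n-\alpha_1\le Z(S)$, which contradicts the optimality of $S$. Therefore no optimal solution has $v_n\in S$; that is, $v_n\in\bar{S}$ in every optimal solution.

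I do not expect a genuine obstacle here: the argument is a one-line exchange. The only points that need care are bookkeeping — keeping the naming convention ``the cluster containing $v_1$ is $S$'' separate from the real assertion that $v_1$ and $v_n$ are \emph{separated} — and the fact that the strict inequality relies essentially on the standing assumption that the $\alpha_i$ are distinct; without it the lemma would be false, since a solution with $v_1$ and $v_n$ together could tie the best value. It is worth noting that the conclusion also yields $\min(S)=\alpha_1$ and $\max(\bar{S})=\alpha_n$, which is the structural foothold for the $O(n)$ algorithm developed in the remainder of this section.
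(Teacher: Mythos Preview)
Your proof is correct and follows essentially the same approach as the paper's: both argue that if $v_1,v_n\in S$ then $Z(S)\ge \alpha_n-\alpha_1$, and then exhibit a strictly better bipartition by isolating one extreme element as a singleton. The only cosmetic difference is that you take $S'=\{v_1\}$ (yielding $Z(S')=\alpha_n-\alpha_2$) whereas the paper takes $\bar{S}'=\{v_n\}$ (yielding $Z(S')=\alpha_{n-1}-\alpha_1$); these are symmetric and equally valid.
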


\begin{proof} If $v_1\in \bar{S}$, then $S$ and $\bar{S}$ may be interchanged with no change
to the value of the objective function.  When $v_1, v_n \in S$, the objective
value can be no lower than ${\alpha_n - \alpha_1}$.  However the solution $S'=V\backslash v_n$ and $\bar{S}'=v_n$ has an
objective value that is at most ${\alpha_{n-1} - \alpha_1}$ with
$v_n\in\bar{S}'$. So a solution that does not contain both $v_1$ and $v_n$ in $S$ is strictly better. 
\end{proof}

Therefore, we can assume that $max(\bar{S})=\alpha_n$ and
$min(S)=\alpha_1$. A more general lemma proves that there exists an optimal solution with $max(S) \leq
min(\bar{S})$:
\begin{lem}\label{lem:maxS}
There is an optimal solution to the {\sf min range sum} problem with $max(S) \leq
min(\bar{S})$.
\end{lem}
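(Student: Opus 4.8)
The plan is to build on Lemma~\ref{lem:n_in_S_bar}, which lets me assume $v_1\in S$ and $v_n\in\bar S$ in any optimal solution. Since $\alpha_1<\cdots<\alpha_n$, this makes $min(S)=\alpha_1$ the global minimum and $max(\bar S)=\alpha_n$ the global maximum, so the objective simplifies to $Z(S)=max(S)-\alpha_1+\alpha_n-min(\bar S)$; minimizing $Z$ thus amounts to making $max(S)$ as small and $min(\bar S)$ as large as possible. I would take an arbitrary optimal solution $(S,\bar S)$: if $max(S)\le min(\bar S)$ there is nothing to prove, so I assume $max(S)>min(\bar S)$ and derive a contradiction by producing a strictly cheaper partition.

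The surgery is the natural one. Set $b=min(\bar S)$ and define $S'=\{v_i\in S : \alpha_i<b\}$ and $\bar S'=V\setminus S'=\bar S\cup\{v_i\in S:\alpha_i>b\}$; note that no node other than the one in $\bar S$ realizing $b$ has $\alpha$-value exactly $b$, because the scalars are distinct, so every node of $S$ has value strictly below or strictly above $b$. Three bookkeeping facts are needed: (i) $S'$ is non-empty, since $\alpha_1<b$ forces $v_1\in S'$, and $\bar S'\supseteq\bar S\neq\varnothing$, so $(S',\bar S')$ is a legitimate bipartition; (ii) $min(S')=\alpha_1$ and $max(S')<b$ by construction; (iii) $min(\bar S')=b$ (the node of $\bar S$ achieving $b$ remains, and every node moved in exceeds $b$), while $max(\bar S')\le\alpha_n$ since $\bar S'\subseteq V$.

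Putting these together, $Z(S')=(max(S')-\alpha_1)+(max(\bar S')-min(\bar S'))\le(max(S')-\alpha_1)+(\alpha_n-b)$, whereas $Z(S)=(max(S)-\alpha_1)+(\alpha_n-b)$; since $max(S')<b<max(S)$ we conclude $Z(S')<Z(S)$, contradicting the optimality of $(S,\bar S)$. Hence every optimal solution satisfying $v_1\in S$ (as each optimal solution does, by Lemma~\ref{lem:n_in_S_bar}) already has $max(S)\le min(\bar S)$, which in particular gives the existence claimed, and incidentally shows the optimum is attained by a ``threshold'' partition $S=\{v_1,\dots,v_j\}$. I do not anticipate a genuine obstacle here; the only point needing attention is that the surgery neither empties $S$ nor disturbs $min(S)=\alpha_1$ nor lowers $min(\bar S)$, and all three are guaranteed by Lemma~\ref{lem:n_in_S_bar} together with the standing assumption that the $\alpha_i$ are pairwise distinct.
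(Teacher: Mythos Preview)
Your argument is correct and follows essentially the same route as the paper: starting from an optimal $(S,\bar S)$ with $v_1\in S$, $v_n\in\bar S$ and $max(S)>min(\bar S)$, you move the ``overhanging'' elements of $S$ (those with $\alpha_i>min(\bar S)$) into $\bar S$, observe that $min(\bar S)$ and $max(\bar S)$ are unchanged while $max(S)$ strictly drops, and conclude that the objective strictly decreases. The paper performs the same surgery, phrased as removing the block $\{v_\ell,\dots,v_j\}$ from $S$; since $v_\ell$ was already in $\bar S$, this is exactly your operation. Your write-up is in fact a bit more careful with the bookkeeping (non-emptiness of $S'$, preservation of $min(S')=\alpha_1$), and by casting it as a contradiction you obtain the marginally stronger statement that \emph{every} optimal solution (with the labeling $v_1\in S$) satisfies $max(S)\le min(\bar S)$, whereas the paper only asserts existence.
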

\begin{proof} Suppose not.  Let $(S,\bar{S})$ be a feasible solution
with $max(S)=\alpha _j > \alpha_{\ell}= min(\bar{S})$.  Thus $1<\ell <j<n$.  Construct another feasible solution by letting
$S'=S\backslash\{v_{j},v_{j-1},...,v_{\ell}\}$ and $\bar{S'}=
\bar{S}\cup\{v_{j},v_{j-1},...,v_{\ell}\}$.  The solution $(S',\bar{S'})$ is
feasible since $S'$ and $\bar{S'}$ are  non-empty.  Note that the maximum and
minimum elements of $\bar{S'}$ are the same as those of $\bar{S}$, and thus the
range for both is the same.
However, the objective value $Z(S')$ can only be smaller than $Z(S)$:
   \[
\hspace{.4in}\begin{array}{ll} Z(S') & =max(S')-min(S')+max(\bar{S'})-min(\bar{S'})=
\alpha _{\ell -1}-\alpha
_1-(\alpha_n-\alpha _{\ell}) =\\
& max(S)- (\alpha_j-\alpha _{\ell
-1})-min(S)+max(\bar{S})-min(\bar{S})=\\
& Z(S)- (\alpha_j-\alpha _{\ell -1})\leq
Z(S).
\end{array}\]

Thus there is  an optimal solution in which $range(S)$ and $range(\bar{S})$  form
two non-overlapping intervals. \end{proof}

\begin{figure}[ht]
  \begin{center}
  \scalebox{0.5}{
      \epsfig{figure =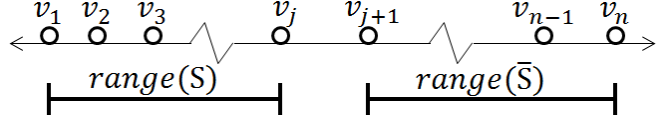}
      }
      \end{center}
  \caption{The elements of $V$ arranged on the real line with $range(S)$ and
  $range(\bar{S})$ for $S=\{v_1,v_2,...,v_j\}$ and $\bar{S}=\{v_{j+1},v_{j+2},...,v_n\}$. }\label{numberline}
\end{figure}
Consider the elements of $V$ arranged on the real line, where each element
$v_i$ is placed at position $\alpha_i$.   We observe that the largest gap between two consecutive elements plays a role in the {\sf min range sum} partition:
\begin{cor}\label{cor:range_sum_optimal}
If $p^*$ corresponds to the index of the largest value in the set of gaps $\mathcal{G}=\{\alpha_{p+1}-\alpha_{p}|, p=1,\ldots,n\}$, then an optimal solution to the {\sf min range sum}
problem is $S=\{v_{1},v_{2},...,v_{p^*}\}$ and
$\bar{S}=\{v_{{p^*+1}},v_{{p^*+2}},...,v_{n}\}$ with an optimal value of
$\alpha_{p^*}-\alpha_{1}+\alpha_{n}-\alpha_{{p^*+1}}$.
\end{cor}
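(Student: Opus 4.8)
The plan is to combine the two lemmas already proved with a one-line computation. First I would invoke Lemma~\ref{lem:n_in_S_bar} and Lemma~\ref{lem:maxS}: together they guarantee an optimal solution $(S,\bar{S})$ with $v_1\in S$, $v_n\in\bar{S}$, and $max(S)\le min(\bar{S})$. Because the scalars are sorted and distinct, such a solution is forced to have the ``prefix'' form $S=\{v_1,\ldots,v_p\}$, $\bar{S}=\{v_{p+1},\ldots,v_n\}$ for some index $p$; feasibility (both sets non-empty) is exactly the condition $1\le p\le n-1$. So it suffices to minimize $range(S)+range(\bar{S})$ over these $n-1$ candidate partitions.

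For the partition indexed by $p$ we have $min(S)=\alpha_1$, $max(S)=\alpha_p$, $min(\bar{S})=\alpha_{p+1}$, $max(\bar{S})=\alpha_n$, so
\[
range(S)+range(\bar{S}) = (\alpha_p-\alpha_1)+(\alpha_n-\alpha_{p+1}) = (\alpha_n-\alpha_1)-(\alpha_{p+1}-\alpha_p).
\]
Since $\alpha_n-\alpha_1$ does not depend on $p$, minimizing this expression is the same as maximizing the gap $\alpha_{p+1}-\alpha_p$ over $p\in\{1,\ldots,n-1\}$, i.e.\ over the gap set $\mathcal{G}$. By definition $p^*$ attains that maximum, so $p=p^*$ yields an optimal solution, namely $S=\{v_1,\ldots,v_{p^*}\}$ and $\bar{S}=\{v_{p^*+1},\ldots,v_n\}$, with optimal value $\alpha_{p^*}-\alpha_1+\alpha_n-\alpha_{p^*+1}$.

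There is essentially no obstacle here; the only point needing a moment's care is the structural reduction in the first paragraph — observing that $max(S)\le min(\bar{S})$ together with $v_1\in S$, $v_n\in\bar{S}$ leaves no room for anything other than a split of $V$ at a single cut point $p$, and that the admissible cut points are exactly $p=1,\ldots,n-1$. Once that is noted, the corollary is just the observation that the objective on these candidates is a constant minus the gap, hence minimized at the largest gap.
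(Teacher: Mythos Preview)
Your proof is correct and follows essentially the same approach as the paper: use Lemmas~\ref{lem:n_in_S_bar} and~\ref{lem:maxS} to reduce to prefix partitions $S=\{v_1,\ldots,v_p\}$, then observe that the objective equals $(\alpha_n-\alpha_1)-(\alpha_{p+1}-\alpha_p)$ and is minimized at the largest gap. The paper's version is slightly terser---it rewrites the objective as $\max_S\,(min(\bar S)-max(S))$ and appeals directly to the gap interpretation---but the logical content is the same.
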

\begin{proof}
It was shown in Lemma \ref{lem:n_in_S_bar} that  $min(S)=\alpha_1$ and
$max(\bar{S})=\alpha_n$.  Thus the objective is,
$Z(S^*)=\min_{\varnothing
\subset S \subset V} \  max(S) - \alpha_1 + \alpha_n - min(\bar{S})
$.
This problem is then equivalent to the maximization problem:
$\max_{\varnothing \subset S \subset V}  \ min(\bar{S})-max(S)$
which is to maximize the gap between the largest element of $S$ and  the
smallest element of $\bar{S}$ as stated. 
\end{proof}

Finding the largest gap requires $O(n)$ arithmetic operations and comparisons. This is generalized for $k$-partitions in Section \ref{prob1k} with the same run time, albeit with a more involved algorithm.

We comment that the algorithm for {\sf min range sum} can be extended to a weighted version of the problem, {\em weighted minimum range sum}. The {\em weighted minimum range sum} problem, for $\gamma \in (0,1)$, is:
\begin{eqnarray*}
\min_{\varnothing \subset S \subset V} Z(S) =  \min_{\varnothing
\subset S \subset V} range(S) + \gamma range(\bar{S}).
\end{eqnarray*}
This weighted problem is also solved in $O(n)$ time.  This is because Lemma \ref{lem:maxS} holds also for the weighted variant.  The algorithm is to compare the $n$ values of $\alpha _n -\alpha _{p+1} + \gamma ( \alpha _p - \alpha _1)$ and select the smallest.

\section{Min max range}\label{minmax}
Recall that the problem of min max range is $\min_{\emptyset
\subset S \subset V}  \ \max \{ range(S),range(\bar{S}) \}$.  This problem is solved relatively easily:  Let $\alpha (\half )=\frac{\alpha _1+ \alpha _n}{2}$.  Next, identify the index $i^*$ such that,
$$\alpha _{i^*} \leq \alpha (\half ) < \alpha _{i^*+1}.$$
We now let $S=\{1,\ldots ,i^*\}$ and $\bar{S}=\{i^*+1,\ldots ,n\}$. The minimum of $\max \{ range(S), range(\bar{S})\}$ is attained for $\max \{\alpha _{i^*}-\alpha _1, \alpha _n -\alpha _{i^*}\}$.
Since finding $i^*$ in the sorted array can be done with binary search in $O(\log n)$ steps, this is the complexity of solving the min max range problem.

\section{Min normalized range sum}\label{normRange}
In the {\sf min normalized range sum} problem, the range  of each segment is
divided by a monotone increasing function of the number of elements in that segment. The problem is to find a partition, $(S,\bar{S})$ that attains the minimum for the problem:

\begin{eqnarray*}
\min_{\varnothing \subset S \subset V} \frac{range(S)}{f(|S|)}
+\frac{range(\bar{S})}{f(|\bar{S}|)}
 .\end{eqnarray*}

This can be rewritten as:
\begin{eqnarray*}
\min_{\varnothing \subset S \subset V} \frac{max(S)-min(S)}{f(|S|)}
+\frac{max(\bar{S})-min(\bar{S})}{f(|\bar{S}|)}
 .\end{eqnarray*}
We note that since the values of $max(S),min(S),max(\bar{S}),
min(\bar{S})$ are all in the set $ \{\alpha_1,\alpha_2,...,\alpha_n\}$, all possible
combinations may be enumerated in polynomial time. The proof of Lemma \ref{lem:genoverlap}, stated next, is similar to that of Lemma \ref{lem:n_in_S_bar}:
\begin{lem} \label{lem:genoverlap}
There exists an optimal solution to the {\sf min normalized range sum} problem with $v_1\in S$ and $v_n\in\bar{S}$.
\end{lem}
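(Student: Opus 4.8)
The plan is to adapt the proof of Lemma~\ref{lem:n_in_S_bar}, replacing the crude estimate ``$Z(S)\ge\alpha_n-\alpha_1$'' used there by the normalized estimate ``$Z(S)\ge\frac{\alpha_n-\alpha_1}{f(|S|)}$'' and then pushing the denominator up via the monotonicity of $f$. Throughout I assume, as is implicit in the problem statement, that $f$ is positive on $\{1,\dots,n\}$ so that all divisions are defined. First, since the objective $\frac{range(S)}{f(|S|)}+\frac{range(\bar{S})}{f(|\bar{S}|)}$ is invariant under interchanging $S$ and $\bar{S}$, every optimal partition may be relabeled so that $v_1\in S$; fix one such optimal $(S,\bar{S})$, and it remains only to show $v_n\in\bar{S}$.

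Suppose, for contradiction, that $v_n\in S$. Then $S$ contains both the globally smallest and globally largest scalars, so $range(S)=\alpha_n-\alpha_1$, while $\bar{S}\ne\varnothing$ forces $|S|\le n-1$. Consequently
\[
Z(S)=\frac{\alpha_n-\alpha_1}{f(|S|)}+\frac{range(\bar{S})}{f(|\bar{S}|)}\ \ge\ \frac{\alpha_n-\alpha_1}{f(|S|)}\ \ge\ \frac{\alpha_n-\alpha_1}{f(n-1)},
\]
where the first inequality drops a nonnegative term and the second uses $|S|\le n-1$ together with $f$ being non-decreasing. Now take the feasible partition $S'=V\setminus\{v_n\}$, $\bar{S}'=\{v_n\}$ (both parts nonempty since $n\ge 2$). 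Here $range(\bar{S}')=0$ and $range(S')=\alpha_{n-1}-\alpha_1$, so $Z(S')=\frac{\alpha_{n-1}-\alpha_1}{f(n-1)}$. Since the $\alpha_i$ are distinct we have $\alpha_{n-1}<\alpha_n$, hence $Z(S')<Z(S)$, contradicting the optimality of $(S,\bar{S})$. Therefore $v_n\in\bar{S}$, giving an optimal solution with $v_1\in S$ and $v_n\in\bar{S}$.

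I expect no real obstacle: the argument is essentially that of Lemma~\ref{lem:n_in_S_bar}, and the single point where it genuinely differs is bookkeeping with the denominators --- one must notice that the ``bad'' set $S$ has at most $n-1$ elements, so monotonicity of $f$ raises its normalized range to at least $\frac{\alpha_n-\alpha_1}{f(n-1)}$, which is precisely the denominator that appears for the replacement solution; distinctness of the scalars then supplies the strict improvement. The only mild caveat is the standing positivity assumption on $f$, which should be made explicit.
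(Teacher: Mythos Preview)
Your proof is correct and follows essentially the same route as the paper's: relabel by symmetry so that $v_1\in S$, observe that if also $v_n\in S$ then $|S|\le n-1$ and monotonicity of $f$ gives $Z(S)\ge \frac{\alpha_n-\alpha_1}{f(n-1)}$, then exhibit the strictly better partition $S'=V\setminus\{v_n\}$, $\bar{S}'=\{v_n\}$. Your write-up is in fact a bit more careful than the paper's, making explicit the positivity of $f$ and the strict inequality coming from distinctness of the $\alpha_i$.
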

\begin{proof}
If $v_1\in \bar{S}$, then $S$ and $\bar{S}$ may be interchanged with no change
to the value of the objective function.  Due to monotonicity of $f(.)$, and the non-emptiness requirement on $S$, we know $f(n-1) \geq f(|S|)$. Therefore, when $v_1, v_n \in S$, the objective
value can be no lower than $\frac{\alpha_n - \alpha_1}{f(n-1)}$. This is attained
for instance for the solution $S=V\backslash v_i$ and $\bar{S}=v_i$ for
$1<i<n$. However the solution $S'=V\backslash v_n$ and $\bar{S}'=v_n$ has an
objective value that is at most $\frac{\alpha_{n-1} - \alpha_1}{f(n-1)}$ with
$v_n\in\bar{S}'$.
\end{proof}

A generalization of Lemma \ref{lem:maxS} states that in an optimal solution
the two intervals representing $range(S)$ and $range(\bar{S})$ do not overlap:
\begin{lem} \label{lem:genoverlap2}
For any feasible solution $(S,\bar{S})$ to the {\sf min normalized range sum}
problem with $max(S)>min(\bar{S})$, there is a feasible solution
$(S',\bar{S}')$ with $max(S')\leq min(\bar{S'})$ and a lower objective function
value.
\end{lem}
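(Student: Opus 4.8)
The approach parallels the proof of Lemma~\ref{lem:maxS} only superficially. There one repairs an overlapping pair $(S,\bar{S})$ by moving the block of indices lying between $min(\bar{S})$ and $max(S)$ from $S$ to $\bar{S}$; \textbf{here the same move fails}, because it changes $|S|$ and $|\bar{S}|$ and hence the denominators $f(|S|),f(|\bar{S}|)$. For the term whose set shrinks, numerator and denominator both decrease, so its value is not controlled, and the exact cancellation available in the range-sum case no longer applies. This is the main obstacle. The remedy is to pass to a non-overlapping partition with \emph{exactly the same cardinalities} as $(S,\bar{S})$, so that only the ranges (the numerators) change.

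The steps, in order, are as follows. (i) After the routine reduction --- interchanging $S$ and $\bar{S}$ if needed, and dispatching the case in which $v_1$ and $v_n$ lie on the same side exactly as in Lemma~\ref{lem:genoverlap} (that side then has range $\alpha_n-\alpha_1$, so $V\setminus\{v_n\}$ versus $\{v_n\}$ is strictly better and already non-overlapping, using $f(|S|)\le f(n-1)$) --- assume $v_1\in S$ and $v_n\in\bar{S}$, so $min(S)=\alpha_1$ and $max(\bar{S})=\alpha_n$. (ii) Write $max(S)=\alpha_j$ and $min(\bar{S})=\alpha_\ell$; the overlap hypothesis is $\ell<j$, and moreover $1<\ell<j<n$. (iii) Set $s=|S|$ and take $S'=\{v_1,\dots,v_s\}$, $\bar{S}'=\{v_{s+1},\dots,v_n\}$. (iv) Establish the size estimate $\ell\le s\le j-1$: every $v_i$ with $i<\ell$ belongs to $S$ (its value is below $min(\bar{S})$) and $v_\ell\notin S$, while every $v_i$ with $i>j$ belongs to $\bar{S}$ and $v_j\in S$, so $\{v_1,\dots,v_{\ell-1}\}\cup\{v_j\}\subseteq S\subseteq\{v_1,\dots,v_{\ell-1}\}\cup\{v_{\ell+1},\dots,v_j\}$. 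In particular $1\le s\le n-1$, so $S'$ and $\bar{S}'$ are non-empty, and $max(S')=\alpha_s<\alpha_{s+1}=min(\bar{S}')$, so $(S',\bar{S}')$ is non-overlapping.

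(v) Finally, compare objective values. Since $|S'|=s=|S|$ and $|\bar{S}'|=n-s=|\bar{S}|$, the denominators are untouched, so only the ranges matter; using $\alpha_s\le\alpha_{j-1}<\alpha_j$ and $\alpha_{s+1}\ge\alpha_{\ell+1}>\alpha_\ell$ we get
\[
\frac{range(S')}{f(|S'|)}+\frac{range(\bar{S}')}{f(|\bar{S}'|)}
=\frac{\alpha_s-\alpha_1}{f(s)}+\frac{\alpha_n-\alpha_{s+1}}{f(n-s)}
<\frac{\alpha_j-\alpha_1}{f(s)}+\frac{\alpha_n-\alpha_\ell}{f(n-s)}
=\frac{range(S)}{f(|S|)}+\frac{range(\bar{S})}{f(|\bar{S}|)},
\]
which is the assertion. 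The only delicate point is step (iv) --- pinning $s$ between $\ell$ and $j-1$ and checking that $S'$ and $\bar{S}'$ remain non-empty; once the cardinalities are frozen, step (v) is immediate. Monotonicity of $f$ enters only in the degenerate reduction of step~(i).
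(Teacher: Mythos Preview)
Your construction is exactly the paper's: take $S'=\{v_1,\dots,v_{|S|}\}$ to freeze the cardinalities, so the denominators $f(|S|),f(|\bar S|)$ are unchanged and only the numerators need comparing. The paper's two-sentence proof simply asserts $range(S)\ge range(S')$ and $range(\bar S)\ge range(\bar S')$ without argument; your steps (i) and (iv) supply the reduction to $v_1\in S$, $v_n\in\bar S$ and the size bound $\ell\le s\le j-1$ that the paper omits but that is actually needed (without the reduction, the bare assertion $range(S)\ge range(S')$ can fail).

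One small caveat on step~(i): the interchange of $S$ and $\bar S$ need not preserve the hypothesis $max(S)>min(\bar S)$. If the original pair is a reversed consecutive split, say $S=\{v_{p+1},\dots,v_n\}$ and $\bar S=\{v_1,\dots,v_p\}$, then the hypothesis holds trivially ($\alpha_n>\alpha_1$), yet after your swap the pair is already non-overlapping and no \emph{strict} improvement exists. This is a defect of the lemma's wording (``lower'' should read ``no higher'') rather than of your method; the paper's proof shares it, and the intended consequence---that one may restrict to consecutive partitions---is unaffected.
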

\begin{proof}
For a feasible solution $(S,\bar{S})$ with $max(S)>min(\bar{S})$, define
$S'=\{v_1,...,v_{|S|}\}$ and $\bar{S}'=\{v_{|S|+1},...,v_n\}$.  Note that since
$|S|=|S'|$ and $|\bar{S}|=|\bar{S}'|$, the denominators stay the same. However, $range(S)\geq range(S')$ and
$range(\bar{S})\geq range(\bar{S}')$, so the objective function value is lower
for $(S',\bar{S}')$.
\end{proof}

This implies that for an optimal solution $(S,\bar{S})$ with $max(S)=\alpha_i$,
$min(\bar{S})=\alpha_{i+1}$.  It is therefore sufficient to enumerate the $n-1$ non-overlapping bipartitions in order to solve the {\sf min normalized
range sum} problem.  With the element index $i^*$ so that:
\begin{eqnarray*}
i^*=\arg\min_{i=1..n-1} \frac{\alpha_i - \alpha_1}{f(i)} + \frac{\alpha_n - \alpha_{i+1}}{f(n-i)},
\end{eqnarray*}
the optimal solution is $S=\{v_1,...,v_{i^*}\}$ and $\bar{S}=
\{v_{i^*+1},...,v_n\}$ with an objective function value of $\frac{\alpha_{i^*}
- \alpha_1}{f(i^*)} + \frac{\alpha_n - \alpha_{i^*+1}}{f(n-i^*)}$.
The complexity of this algorithm is $O(n)$.

\section{Min range cut}\label{Prob2}
The {\sf min range cut} problem is to find the partition $(S,\bar{S})$ such
that:
\begin{eqnarray*}
\min_{\varnothing \subset S \subset V} range(S)+range(\bar{S})+C(S,\bar{S}).
 \end{eqnarray*}
As a result of adding the cut to the objective function, some of the properties from the previous sections no longer apply.  For example, we may not
assume that the elements with the largest and smallest values, $v_{1}$ and
$v_{n}$, are in different clusters as the weight on the edge between them,
$w_{1,n}$, could be infinite hence forcing them to be into the same
cluster.

We associate with any partition into $S$ and $\bar{S}$, two respective intervals, $I(S)=I_1=[min (S),max(S)]$ and $I(\bar{S})=I_2=[min (\bar{S}),max(\bar{S})]$.  Since $(S,\bar{S})$ is a partition, the endpoints of these two intervals are four distinct values such that two of the four endpoints must be $\alpha _1$ and $\alpha _n$, and the remaining two endpoints, $\alpha _p$ and $\alpha _q$, are selected from
$\{ \alpha _2,\ldots,\alpha _{n-1} \}$.  We say that the pair of intervals $I_1$ and $I_2$ is feasible if $\{ \alpha _1,\alpha _2,\ldots,\alpha _{n} \} \subseteq I_1\cup I_2 $.
Given $\alpha _p$ and $\alpha _q$ for $2\leq p<q \leq n-1$, then, unless $p=q-1$, there is one feasible choice of the two intervals, as  $[\alpha_1,\alpha _q]$ and $[\alpha_p,\alpha _n]$. If $p=q-1$ then both choices of the two intervals: the pair $[\alpha_1,\alpha _{p+1}]$, $[\alpha_p,\alpha _n]$, and the pair $[\alpha_1,\alpha _{p}]$, $[\alpha_{p+1},\alpha _n]$, are feasible in that their union contains all the values.  For the given selection of $\alpha _p$ and $\alpha _q$  there is a second choice of the interval pair $[\alpha_1,\alpha _n]$ and $[\alpha_p,\alpha _q]$.  It follows that there are $O(n^2)$ feasible choices of $I_1$ and $I_2$ pairs.

Recall that $S$ a subset of interval $I$ is denoted by $S\subseteq I$.
The {\sf min range cut} problem is equivalent to the problem of finding a pair of feasible intervals
minimizing the objective.  We call this problem the {\sf min interval-range cut} problem:

\begin{eqnarray*}
\min _{I_1,I_2 \mbox{ feasible}}\min_{\emptyset \subset S \subseteq I_1, \emptyset \subseteq \bar{S}\subseteq I_2} range(I_1)+range(I_2)+C(S,\bar{S}).
 \end{eqnarray*}
The {\sf min interval-range cut} problem can be equivalently rewritten as:
$$\min _{I_1,I_2 \mbox{ feasible}} range(I_1)+range(I_2)+\min_{\emptyset \subset S \subseteq I_1, \emptyset \subset \bar{S}\subseteq I_2}C(S,\bar{S}).$$

Any feasible solution to {\sf min interval-range cut} is a feasible solution for the {\sf min range cut} problem and vice versa.  It is possible that for a feasible pair of intervals, the objective value of the interval-range cut with an implied pair of sets  $S$ and $\bar{S}$ has the range of $S$ or the range of $\bar{S}$ strictly greater than the range of the respective intervals and therefore larger than the respective objective value for range cut.  This occurs when either one of the sets $S$ and $\bar{S}$ in the optimal solution for $I_1$ and $I_2$, is strictly contained in either $I_1$ or $I_2$ and hence has smaller range.  In that case the selection of $I_1$, $I_2$, while feasible, is not optimal, as there exists another selection pair of feasible intervals $I'_1$, $I'_2$, such that $I'_1 \subseteq I_1$, and $I'_2 \subseteq I_2$, with a strictly better objective value.

\begin{prop} The optimal solution of {\sf min range cut} problem is identical to the optimal solution of {\sf min interval-range cut} problem.
\end{prop}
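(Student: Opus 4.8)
The plan is to show that {\sf min range cut} and {\sf min interval-range cut} have the same optimal value, and that the bipartition $(S,\bar S)$ occurring in an optimal solution of one problem is an optimal solution of the other. I would establish the equality of the two optimal values by a pair of short inequalities, and then read off the solution correspondence from the chains of inequalities used to prove them.

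For one direction, take any feasible solution of {\sf min interval-range cut}: a feasible interval pair $(I_1,I_2)$ (so $\{\alpha_1,\ldots,\alpha_n\}\subseteq I_1\cup I_2$) together with nonempty $S,\bar S$ with $S\subseteq I_1$ and $\bar S\subseteq I_2$. Every element of $S$ has its scalar value inside $I_1$, so in particular the two extreme scalars of $S$ lie in $I_1$ and $range(S)\le range(I_1)$; likewise $range(\bar S)\le range(I_2)$. Hence the {\sf min interval-range cut} objective at this solution, $range(I_1)+range(I_2)+C(S,\bar S)$, is at least $range(S)+range(\bar S)+C(S,\bar S)$, and the latter is at least the optimal value of {\sf min range cut}, since $(S,\bar S)$ is a feasible bipartition for it. Taking the minimum over all feasible solutions of {\sf min interval-range cut} shows its optimal value is at least that of {\sf min range cut}.

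For the other direction, let $(S^*,\bar S^*)$ be an optimal bipartition for {\sf min range cut}, and put $I_1^*=[min(S^*),max(S^*)]$ and $I_2^*=[min(\bar S^*),max(\bar S^*)]$. Since each node lies in $S^*$ or in $\bar S^*$, its scalar lies in $I_1^*$ or in $I_2^*$, so $\{\alpha_1,\ldots,\alpha_n\}\subseteq I_1^*\cup I_2^*$ and $(I_1^*,I_2^*)$ is a feasible interval pair (indeed one of the $O(n^2)$ enumerated pairs, as its endpoints are scalar values and $\alpha_1,\alpha_n$ are necessarily among them). For this pair, $(S^*,\bar S^*)$ is feasible for the inner minimization, and by construction $range(I_1^*)=range(S^*)$ and $range(I_2^*)=range(\bar S^*)$. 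Therefore the {\sf min interval-range cut} objective at $(I_1^*,I_2^*,S^*,\bar S^*)$ equals $range(S^*)+range(\bar S^*)+C(S^*,\bar S^*)$, the optimal value of {\sf min range cut}; hence the optimal value of {\sf min interval-range cut} is at most that of {\sf min range cut}.

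Combining the two inequalities, the optimal values coincide. The solution correspondence is then immediate: the construction above sends any optimal {\sf min range cut} bipartition to an optimal {\sf min interval-range cut} solution, and conversely, for any optimal solution $(I_1,I_2,S,\bar S)$ of {\sf min interval-range cut} the inequalities used in the first direction are forced to hold with equality, so $range(S)+range(\bar S)+C(S,\bar S)$ equals the common optimal value and $(S,\bar S)$ is optimal for {\sf min range cut}. I do not expect a genuine obstacle; the only points requiring care are verifying that $(I_1^*,I_2^*)$ meets the definition of a feasible interval pair, and observing that any slack between $range(S)$ and $range(I_1)$ — which occurs precisely when $S$ is strictly contained in $I_1$ — can only increase the {\sf min interval-range cut} objective, which is exactly what lets an optimal {\sf min range cut} bipartition transfer to an optimal {\sf min interval-range cut} solution and back.
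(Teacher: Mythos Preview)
Your proof is correct and follows essentially the same approach as the paper: both establish the equivalence by taking an optimal bipartition $(S^*,\bar S^*)$ of {\sf min range cut} and constructing the feasible interval pair from its extreme scalars, then using the observation (which the paper makes in the paragraph preceding the proposition) that any slack between $range(S)$ and $range(I_1)$ only increases the {\sf min interval-range cut} objective. Your write-up is more explicit than the paper's one-line proof in spelling out both inequality directions and the solution correspondence, but the underlying argument is the same.
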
 \label{prop:I}
\begin{proof} Let $S^*$ and $\bar{S^*}$ be an optimal solution to {\sf min range cut}.  Then, $I^* _1= [\alpha _1, max (S^*)]$ and $I^* _2= [min (\bar{S^*}, \alpha _n]$ are the arguments of the optimal solution to {\sf min interval-range cut}.
\end{proof}


In order to find an optimal solution for {\sf min range cut}, it is therefore sufficient to enumerate all $O(n^2)$ pairs of feasible intervals, and for each pair of intervals $I_1$ and $I_2$, to solve the respective min-cut problem $\min_{\emptyset \subset S \subseteq I_1, \emptyset \subset \bar{S}\subseteq I_2}C(S,\bar{S})$.  We now elaborate on how to solve the min-cut problem for a given feasible pair of intervals $I_1$ and $I_2$. 

The elements of $V$ are partitioned into $(V_S,V_{\bar{S}},V_F)$, as follows:
\begin{eqnarray*}
V_F& =&\{ i\in V| i\in I_1 \cap I_2\}\\
V_S&=&\{ i\in V| i\in I_1\setminus V_F \}\\
V_{\bar{S}}&=&\{ i\in V| i\in I_2\setminus V_F \}.
\end{eqnarray*}
By construction, any solution to the range cut problem for this given pair of intervals satisfies $V_S\subseteq S$ and $V_{\bar{S}} \subseteq \bar{S}$.  Because the endpoints of the two intervals are distinct, both $V_S$ and $V_{\bar{S}}$ are non-empty.  It remains to partition $V_F$ and allocate its elements to $S$ and $\bar{S}$ so that the cut value is minimized.  Namely,
 $$\min_{\emptyset \subset S \subseteq I_1, \emptyset \subset \bar{S}\subseteq I_2}C(S,\bar{S})= \min _{S \subseteq V_F}C(S,\bar{S}).$$

In cases where $V_F=\emptyset$ the partition into $S$ and $\bar{S}$ is pre-determined and there is no need to solve the min-cut problem.  This happens for the pair of intervals  $[\alpha_1,\alpha _{p}]$, $[\alpha_{p+1},\alpha _n]$.  We next construct a graph $G'$ in which a minimum $s,t$-cut partition provides an optimal solution to the range cut problem for a given feasible pair of intervals.

Consider the graph $G=(V,E)$  with the edge weights $w_{ij}$ associated with each edge $[i,j]\in E$.  Let the graph $G'=(\{ s, t\} \cup V_F, E_{st}) $ be constructed as follows:  The nodes corresponding to $V_S$ are ``shrunk" into a source node $s$, and the nodes corresponding to  $V_{\bar{S}}$ are ``shrunk" into a sink node $t$.
The graph $G'$ is illustrated in Figure \ref{fig:cut_graph}.  It is easy to see that shrinking a node $i$ with the source node $s$ is equivalent, in terms of min-cut in the resulting graph, to adding an arc $(s,i)$ of infinite capacity from $s$ to $i$.  Similarly, shrinking a node $j$ with the sink node $t$ is equivalent to adding an arc $(j,t)$ of infinite capacity from $j$ to $t$.
\begin{figure}[ht]

  \begin{center}
  \scalebox{0.8}{
      \epsfig{figure =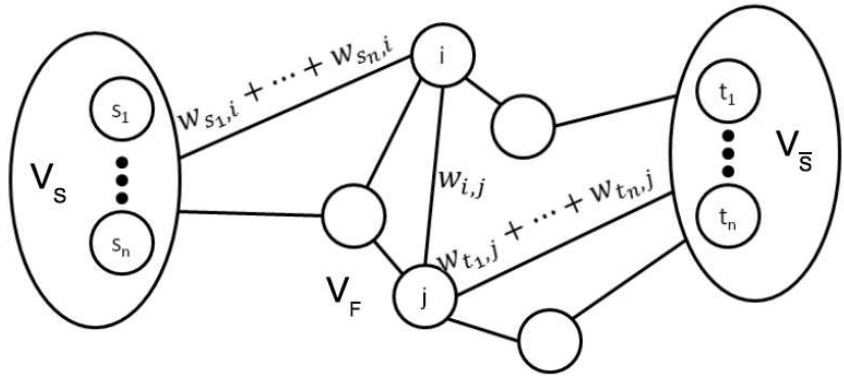}
      }
      \end{center}

  \caption{The $s,t$ graph $G'$ for  $(V_S,V_{\bar{S}},V_F)$ induced by the intervals pair $I_1$, $I_2$. }\label{fig:cut_graph}
\end{figure}

A minimum $s,t$-cut in the graph $G'$ provides an optimal partition of $V_F$ into $S$ and $\bar{S}$.
Therefore, the {\sf min interval-range cut} problem can be solved by enumerating all possible feasible pairs of $I_1$ and $I_2$ and for each finding the min $s,t$-cut partition as described above.  Among all enumerated possibilities, the partition with the lowest objective value is the optimal solution to the {\sf min interval-range cut} problem.

Let ${T(m,n)}$ be the complexity of a minimum $s,t$-cut procedure, on a graph with $m$ arcs and $n$ nodes.  A straight-forward implementation of the algorithm would require $O(n^2)$ calls to the minimum $s,t$-cut procedure, one for each pair of intervals' selection, for a total complexity of $O(n^2\cdot T(m,n))$.

It is shown next that this complexity can be reduced by a factor of $O(n)$ using a {\em parametric cut} procedure:
The parametric flow, or parametric cut, problem is defined on a {\em parametric graph} where the source adjacent capacities and the sink adjacent capacities are functions of a parameter; the source adjacent capacities are monotone non-decreasing in the parameter; and the sink adjacent capacities are monotone non-increasing in the parameter.  All other arcs in the graph have fixed capacities.  The parametric flow problem, and the parametric cut problem which is the focus here, is to solve the maximum flow problem for a list of $q$ parameter values, or for all values of the parameter within an interval of length $U$.  The parametric cut problem for a list of $q$ values was shown to be solved in the complexity of a single cut plus $O(q n)$ for the Push-relabel and Hochbaum's PseudoFlow (HPF) algorithms \cite{gallo, Hoc08}.  Both these max-flow min-cut algorithms have complexity $T(m,n)= O(mn \log {\frac{n^2}{m}})$. (There are other implementations of these algorithms with different complexities as well, e.g.\ $O(n^3)$, \cite{GT,HO13}.)  Therefore the complexity of solving the parametric cut for a sequence of $q$ parameter values is $O(T(m,n)+q n)$, where the term $qn$ accounts for the updates of the source and sink adjacent capacities.

\noindent
{\bf Comment:} The complexity of solving the parametric cut (or flow) problem for all parameter values in an interval of length $U$ was shown in \cite{gallo, Hoc08} to be $O(T(m,n)+ n \log U/\eps)$ for values determined with $\eps$ accuracy (within a $\L _{\infty}$ distance of $\eps$ from the optimal solution).  Although the text in \cite{gallo} claims the complexity of the algorithm to be $O(T(m,n))$, the actual complexity is as stated here{\footnote {The term $n \log U$ cannot be removed from the complexity as proved in \cite{Hoc94,Hoc07,Hoc08}.  Therefore the complexity must depend on the term $\log U$ and hence cannot be strongly polynomial.}}

Consider the parameter list $\{ 1,2,\ldots ,q\}$, and an $s,t$ graph $G$ with arc (or edge) capacities $\{ w^{\ell}_{ij}\}$ that are functions of a parameter $\ell$. An $s,t$ graph $G$ is a parametric graph if the arc capacities are functions of a parameter, and satisfy, for $i,j \neq s,t$:
\begin{eqnarray*}
w^{\ell} _{si} & \leq &w^{\ell +1} _{si} \\
w^{\ell} _{jt} & \geq &w^{\ell +1} _{jt}  \\ 
w^{\ell} _{ij} & = &w^{\ell +1} _{ij}.  
\end{eqnarray*}
Let the parametric graph $G$ with arc capacities $\{ w^{\ell}_{ij}\}$ for a given value of $\ell$ be denoted by $G_{\ell}$.  The parametric graph procedure takes as input, the graph $G$ with arc weights $w_{ij}$ and the parametric sets of source and sink adjacent arcs' capacities:
\\
{\sc parametric cut}$(G, \{ w^{\ell} _{si},w^{\ell} _{jt}\}_{i,j\in V\setminus \{ s,t\} }, \ell=1,\ldots q))$\\
The procedure outputs $S_{\ell}$, for $\ell =1,\ldots ,q$, so that $(S_{\ell}, \bar{S_{\ell}})$ is a minimum cut in $G_{\ell}$.\footnote{The code for HPF {\sc parametric cut} is available at \url{http://riot.ieor.berkeley.edu/Applications/Pseudoflow/parametric.html}.}


Let the graph $G_j$ be generated from graph $G_{j-1}$ by shrinking one node $v$ in $G_{j-1}$ with the source. Since the shrinking of node $v$ with the source is equivalent to adding an arc $(s,v)$ of infinite capacity, then this process increases the capacity of the arc adjacent to source $(s,v)$ from a finite value to infinity, while all other capacities remain constant.  Hence the sequence of graph generated by shrinking with the source, one node at a time, form a parametric graph.  Also, the sequence of graphs generated by shrinking one node at a time with the sink node $t$, done by adding an infinite capacity arc between the node and $t$, form a parametric graph with sink adjacent capacity monotone non-decreasing and source adjacent that are constant, and thus non-increasing.  The complexity of solving the min-cut for such sequence of $q$ graphs, $G_{\ell},\ldots , G_{\ell +q}$, with parametric cut is therefore $O(T(m,n))$ since $q<n$, and $O(n^2)$ complexity is dominated by $T(m,n)$.


In the following theorem we show that  Algorithm \ref{alg:mrc} solves the min range cut problem in $O(nT(m,n))$ steps.

\begin{algorithm}[hbt!]

  \caption{Min Range Cut Algorithm}
  \label{alg:mrc}
  \begin{algorithmic}[1]
   \STATE \textbf{INPUT:} An ordered array $\{\alpha_1,...,\alpha_n\}$
       \STATE \textbf{OUTPUT:} $(S^*,\bar{S}^*)$, and the corresponding objective value  $z^*$
    \STATE \textbf{begin:}
    \STATE $\min(I_1):=\alpha_1$; $z^*=\infty$

\STATE $\max(I_2):=\alpha_n$   \  \texttt{ \small{\%}checking all possible $\max(I_1)$ and $\min(I_2)$)}

\FOR {$i= 1,...,n-1$}
\STATE \hspace{0.5cm} $\max(I_1) := \alpha_i $.
\STATE \hspace{0.5cm} Create graph $G(i)=(V(i),E(i))$: set $v_1$ as source node $s$, shrink  $v_{i+1},...,v_n$ into \\
\hspace{0.5cm} sink $t$; arc capacities for $(p,q)\in E(i)$ are $w_{pq}^{(1)}$.
\STATE \hspace{0.5cm} \textbf {for}~ {$j= 2,...,i-1$}~\textbf{do}

\STATE \hspace{1cm} $\min(I_2) :=\alpha_j $;
\STATE \hspace{1cm}\texttt{ \small{\%}Shrink node $v_{j}$ into $s$}
\[ w_{pq}^{(j)}= \left\{ \begin{array}{ll}
      w_{pq}^{(j-1)} & \mbox{ if     }\  (p,q)\neq (s,v_j) \\
      \infty & \mbox{ if     }\  (p,q)\neq (s,v_j).
                \end{array}
\right. \]
\STATE \hspace{0.5cm} \textbf{end for}
\STATE \hspace{0.5cm} Call {\sc parametric cut}$(G(i), \{ w^{(j)} _{sp},w^{(j)} _{qt}\}_{p,q\in V(i)\setminus \{ s,t\} },j=2,\ldots i-1))$; return $S_2,\ldots ,S_{i-1}$
\STATE \hspace{0.5cm} set $j^*= arg \min _j z(j) =(\alpha _i -\alpha _1)+(\alpha _n -\alpha _j) +C(S_j,\bar{S_j})$. 
\STATE \hspace{0.5cm}  \textbf{if}~{$z(j) < z^*$}~\textbf{then}
\STATE\hspace{1cm} $(S^*,\bar{S}^*):= (S_{j^*},\bar{S}{j^*})$
\STATE \hspace{1cm} $z^*:= z(j)$
\STATE \hspace{0.5cm} \textbf{end if}
\ENDFOR

\STATE  $\max(I_1):=\alpha_n$   \  \texttt{ \small{\%} for $I_1=[\alpha _1,\alpha _n]$ checking all possible $\max(I_2)$ and $\min(I_2)$}
\FOR{$i = 2,...,n-1$}
\STATE \hspace{0.5cm} $\max(I_2) := \alpha_i $.
\STATE \hspace{0.5cm} Create graph $G(i)=(V(i),E(i))$: set  $v_i$ as source node $s$, shrink  $v_1,v_{i+1},...,v_n$ into \\
\hspace{0.5cm} sink $t$; arc capacities for $(p,q)\in E(i)$ are $w_{pq}^{(1)}$.

\STATE \hspace{0.5cm} \textbf{for}~ {$j=3,...,i$}~\textbf{do}
\STATE \hspace{1cm} $\min(I_2) := \alpha_j $.
\STATE \hspace{1cm} \texttt{ \small{\%}Shrink node $v_{j-1}$ into $s$}
\[ w_{pq}^{(j)}= \left\{ \begin{array}{ll}
      w_{pq}^{(j-1)} & \mbox{ if     }\  (p,q)\neq (s,v_{j-1}) \\
      \infty & \mbox{ if     }\  (p,q)\neq (s,v_{j-1}).
                \end{array}
\right. \]
\STATE \hspace{0.5cm} \textbf{end for}
\STATE \hspace{0.5cm} Call {\sc parametric cut}$(G(i), \{ w^{(j)} _{sp},w^{(j)} _{qt}\}_{p,q\in V(i)\setminus \{ s,t\} },j=3,\ldots i))$; return $S_2,\ldots ,S_{i-1}$
\STATE \hspace{0.5cm} set $j^*= arg \min _j z(j) =(\alpha _n -\alpha _1)+(\alpha _i -\alpha _j) +C(S_j,\bar{S}_j)$. 
\STATE \hspace{0.5cm}  \textbf{if}~{$z(j)< z^*$}~\textbf{then}
\STATE \hspace{1cm} $(S^*,\bar{S}^*):=(S,\bar{S})$
\STATE\hspace{1cm}  $z^*:= z(j)$
\STATE \hspace{0.5cm} \textbf{end if}
\ENDFOR
    \STATE \textbf{end}
\end{algorithmic}
\end{algorithm}

\begin{thm} Algorithm \ref{alg:mrc} solves the {\sf min range cut problem} in $O(n{T(m,n)})$ steps.
\end{thm}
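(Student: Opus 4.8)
The plan is to verify two things: correctness (Algorithm~\ref{alg:mrc} returns an optimal solution to {\sf min range cut}) and the running time bound $O(n\,T(m,n))$. Correctness follows by combining Proposition~\ref{prop:I} with the enumeration structure of the algorithm. By Proposition~\ref{prop:I}, it suffices to solve {\sf min interval-range cut}, i.e.\ to minimize over all feasible interval pairs $(I_1,I_2)$. As established in the discussion preceding the algorithm, every feasible pair is of one of two types: (i) $I_1=[\alpha_1,\alpha_q]$, $I_2=[\alpha_p,\alpha_n]$ with $p\le q$ (the overlapping case, where $V_F=\{v_p,\ldots,v_q\}$, or the degenerate non-overlapping case $q=p-1$ where $V_F=\varnothing$), and (ii) $I_1=[\alpha_1,\alpha_n]$, $I_2=[\alpha_p,\alpha_q]$. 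The first outer \textbf{for} loop (indexed by $i=\max(I_1)$'s position) together with its inner loop over $j=\min(I_2)$ enumerates exactly the type-(i) pairs; the second outer loop enumerates the type-(ii) pairs. For each fixed pair, the graph $G(i)$ is built by shrinking $V_S$ into $s$ and $V_{\bar S}$ into $t$, so that by the shrinking argument a minimum $s,t$-cut in $G(i)$ realizes $\min_{\varnothing\subset S\subseteq I_1,\ \varnothing\subset \bar S\subseteq I_2} C(S,\bar S) = \min_{S\subseteq V_F} C(S,\bar S)$; adding the constants $range(I_1)+range(I_2)$ gives the objective value $z(j)$ recorded on lines 15 and 30. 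Since the algorithm keeps the best $z^*$ over all enumerated pairs, and the enumeration is exhaustive over feasible pairs, the returned $(S^*,\bar S^*)$ is optimal.

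For the running time, I would argue as follows. Fix the outer index $i$. The inner loop that repeatedly shrinks $v_j$ (resp.\ $v_{j-1}$) into the source changes only one source-adjacent arc capacity at a time — from a finite value to $\infty$ — and leaves all other capacities unchanged. By the parametric-graph discussion immediately preceding the theorem, this sequence of graphs $G(i)_2,\ldots,G(i)_{i-1}$ is a legitimate parametric family (source-adjacent capacities monotone non-decreasing in the parameter $j$, sink-adjacent capacities constant hence monotone non-increasing). Hence the single call to {\sc parametric cut} on line 14 (resp.\ line 29) returns all the minimum cuts $S_2,\ldots,S_{i-1}$ in total time $O(T(m,n)+ (i-1)\,n) = O(T(m,n))$, using $i-1< n$ so that the $O(n^2)$ update term is dominated by $T(m,n)=O(mn\log(n^2/m))$. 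Each outer loop runs $O(n)$ iterations, and the work outside the parametric call within one iteration — building $G(i)$, and the $O(i)$ comparisons to compute $j^*$ and update $z^*$ — is $O(m+n)$, again dominated by $T(m,n)$. Summing over the $O(n)$ iterations of the two outer loops gives the claimed $O(n\,T(m,n))$.

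The main obstacle, and the step I would spell out most carefully, is the justification that the inner shrinking sequence genuinely forms a parametric graph in the sense required by \cite{gallo,Hoc08}: one must check that \emph{only} source-adjacent (or only sink-adjacent) capacities vary, that they do so monotonically, and that the nodes already shrunk at step $j-1$ remain shrunk at step $j$ — i.e.\ the parameter is used consistently so that successive minimum-cut computations can reuse the flow. A minor related point worth noting is that the two degenerate subcases ($V_F=\varnothing$, i.e.\ the non-overlapping pair $[\alpha_1,\alpha_p],[\alpha_{p+1},\alpha_n]$, and interval pairs where $V_F$ is a single node) need no max-flow call at all, and one should observe they are still covered — either as endpoints of the parametric sweep or because their objective value is computed directly — so that correctness is not lost by skipping them. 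Everything else is bookkeeping.
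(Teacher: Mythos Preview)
Your proposal is correct and follows essentially the same approach as the paper: correctness by exhaustive enumeration of the two families of feasible interval pairs combined with the interval-range-cut reduction, and the $O(nT(m,n))$ bound from $O(n)$ outer iterations each costing one parametric-cut call. Your write-up is in fact more thorough than the paper's own proof, which is quite terse; your explicit justification that the inner shrinking sequence is a valid parametric family and your handling of the degenerate $V_F=\varnothing$ case are spelled out in the paper only in the discussion \emph{preceding} the theorem, not in the proof itself.
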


\begin{proof}
The correctness of the algorithm follows from its enumeration of all possible endpoints of intervals $I_1$ and $I_2$ and for each find the min-cut partition of the elements in the overlap of the two intervals.

The enumeration of all endpoints is done in two parts:  Lines 4-18 deal with pairs of feasible intervals $I_1=[\alpha _p ,\alpha _n]$ and $I_2=[\alpha _1 ,\alpha _q]$, and lines 20-34 deal with pairs of feasible intervals $I_1=[\alpha _1 ,\alpha _n]$ and $I_2=[\alpha _p ,\alpha _q]$ where $I_1\cap I_2 = I_2$.

The complexity of the Min Range Cut Algorithm is dominated by the two {\bf for} loops, in steps $6$ and $21$, each of which
consists of calling at most $n-2$ times for the parametric min-cut procedure. Each call for parametric cut has complexity of $O(T(m,n))$.  Therefore the complexity of the entire algorithm is $O(nT(m,n))$.
\end{proof}

\section{Min normalized range cut problem}\label{Prob3}

Unlike min range sum, min normalized range sum and min range cut, the {\sf min
normalized range cut} problem is not polynomial time solvable.  We demonstrate here the NP-hardness of the problem. In this section we
slightly abuse terminology by referring to optimization problems as NP-complete meaning that their decision version is NP-complete (the correct term for optimization problems is NP-hard).  All problems we address are clearly in NP and we will omit explicitly showing so.
\begin{thm}
The {\sf min normalized range cut} problem,  $\min\limits_{\varnothing \subset S \subset V} \frac{range(S)}{|S|}+
\frac{range(\bar{S})}{|\bar{S}|}+C(S,\bar{S})$, is NP-complete.
\end{thm}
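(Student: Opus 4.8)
The plan is to reduce from a strongly NP-hard balanced-cut-type problem, mirroring the discussion earlier in the excerpt about why \textsf{size-NC} is strongly NP-hard. Recall from Section \ref{eq:expander} and the surrounding text that the objective $\frac{C(S,\bar S)}{|S|}+\frac{C(S,\bar S)}{|\bar S|}$ drives $|S|$ and $|\bar S|$ toward equality, and that after scaling the edge weights so that the cut contribution is negligible the problem becomes the minimum-cut-into-bounded-sets (balanced cut) problem, shown NP-complete in \cite{GJS76}. I would reuse exactly that mechanism here: the role of the ratio terms $\frac{C(S,\bar S)}{|S|}+\frac{C(S,\bar S)}{|\bar S|}$ in \textsf{size-NC} is now played by $\frac{range(S)}{|S|}+\frac{range(\bar S)}{|\bar S|}+C(S,\bar S)$, and the key point is that the range terms can be made to impose a balancing pressure on the cardinalities.

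The construction I would use: start from an instance $G=(V,E)$ of balanced cut ($|V|=2t$, target is a minimum cut with $|S|=|\bar S|=t$). Build a range-clustering instance on the same node set $V$ (possibly augmented with a few auxiliary nodes). Scale the edge weights down by a factor of $\Theta(M)$, $M=\sum_{[i,j]\in E} w_{ij}$, so the $C(S,\bar S)$ term contributes $o(1)$ — it serves only as a tie-breaker selecting the minimum-weight cut among otherwise equally-good partitions, just as in the \textsf{size-NC} argument. Then assign the scalars $\alpha_i$ so that $range(S)+range(\bar S)$ is essentially constant over all partitions of interest (e.g. by clustering the scalar values into two tight groups so that any non-trivial bipartition has both ranges near-constant, or more simply by choosing scalars so that the dominant contribution to $\frac{range(S)}{|S|}+\frac{range(\bar S)}{|\bar S|}$ is $R\bigl(\frac{1}{|S|}+\frac{1}{|\bar S|}\bigr)$ for a fixed $R$). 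Since $\frac{1}{|S|}+\frac{1}{|\bar S|}$ is minimized exactly when $|S|=|\bar S|=t$ (this is the same convexity fact underlying the \textsf{size-NC} hardness), the optimal range-normalized-cut solution is forced onto balanced partitions, and among those it minimizes $C(S,\bar S)$. Hence an optimal solution to \textsf{min normalized range cut} yields an optimal balanced cut, establishing NP-completeness; membership in NP is immediate as noted in the section preamble.

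The main obstacle I expect is making the range terms behave cleanly: unlike the pure cardinality ratios of \textsf{size-NC}, here $range(S)$ genuinely depends on \emph{which} nodes land in $S$, not just how many, so I need the scalar assignment to neutralize that dependence. The cleanest route is probably to give many nodes scalar values packed into a tiny sub-interval together with two ``anchor'' nodes (or anchor clusters forced together by infinite-weight edges, using the shrinking construction from Section \ref{ref:prelims}) carrying the extreme values $\alpha_1$ and $\alpha_n$; then $range(S)$ and $range(\bar S)$ are each pinned to essentially the same fixed value $R$ regardless of the partition, and the objective collapses to $R\bigl(\tfrac{1}{|S|}+\tfrac{1}{|\bar S|}\bigr)+C(S,\bar S)+o(1)$. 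I would need to verify the gap argument quantitatively — that the penalty for an unbalanced split (i.e.\ $\tfrac{1}{|S|}+\tfrac{1}{|\bar S|}$ jumping from $\tfrac{2}{t}$ to at least $\tfrac{1}{t-1}+\tfrac{1}{t+1}$) strictly dominates the entire $o(1)$ range-perturbation-plus-cut budget — but this is a routine choice of the scaling constant, analogous to the scaling by $M$ already used for \textsf{size-NC}. The same construction with $f(|S|)=|S|$ being the only normalizer needed means the ``even for $f(|S|)=|S|$'' claim in Table \ref{tb:res} comes for free. Finally I would remark that the identical reduction, applied with $k$ balanced parts, gives the $k$-partition version \textsf{min $k$-normalized range cut} NP-completeness as well.
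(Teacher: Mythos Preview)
Your overall strategy matches the paper's: reduce from balanced cut by arranging the scalars so that the range terms collapse to a constant times $\tfrac{1}{|S|}+\tfrac{1}{|\bar S|}$, with edge weights scaled so that $C(S,\bar S)$ acts only as a tiebreaker. The paper in fact routes this through an explicitly named intermediate problem, \textsf{min inverse set size cut} $\min \tfrac{1}{|S|}+\tfrac{1}{|\bar S|}+C(S,\bar S)$, but that is just bookkeeping; the substance is the same idea you describe.

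The gap is in your concrete scalar assignment. With two anchor nodes at $\alpha_1=0$ and $\alpha_n=2R$ and all remaining nodes packed near the midpoint $R$, nothing forces the anchors into opposite sides. If both anchors lie in $S$ with $|S|=n-1$, then $range(S)=2R$, $range(\bar S)\approx 0$, and the range contribution is roughly $\tfrac{2R}{n-1}$, which for $n\ge 3$ is strictly smaller than the balanced value $\tfrac{R}{n/2}+\tfrac{R}{n/2}=\tfrac{4R}{n}$. Hence the optimum is \emph{not} balanced and your reduction does not recover the balanced cut. Your ``anchor clusters'' variant has the same defect: two heavy clusters, one at each extreme, can still both be absorbed into a single large side.

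The paper's construction repairs exactly this: it anchors \emph{every} node, not just two. Each original vertex $v$ (given value $0$) is paired with a new private partner $v'$ (given value $2$) via an infinite-capacity edge. Any finite cut must keep each pair together, so every nonempty side contains at least one pair and therefore has range exactly $2$; the objective becomes $\tfrac{2}{|S|}+\tfrac{2}{|\bar S|}+C(S,\bar S)$ with $|S|$ and $|\bar S|$ exactly twice the original-side cardinalities, i.e.\ precisely \textsf{min inverse set size cut} on the original graph. That per-node pairing is the ingredient your sketch is missing.
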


\begin{proof}
To prove NP-completeness, we use Karp reductions in two steps to show that {\sf min normalized
range cut} $\propto_P$ {\sf min inverse set size cut} $\propto_P$ {\sf balanced
cut}, where $Q_1 \propto_P Q_2$ means that problem $Q_1$ is at least as hard as problem $Q_2$, and equivalently, that problem $Q_2$ is polynomial time reducible to problem $Q_1$.   We first introduce these problems and then provide the NP-completeness proof.
The {\sf balanced cut} problem is a known NP-complete problem of finding the minimum 2-cut where the two
sets in the bipartition are of equal size, \cite{GJS}. For a graph $G=(V,E)$ on even number of nodes $n=|V|$, the balanced cut problem is formulated as:
\begin{eqnarray*}
\min_{S\subset V, |S| = \frac n2} C(S,\bar{S}).
\end{eqnarray*}
The {\sf min inverse set size cut} problem is used here as an
intermediary problem.  This problem is defined as:
\begin{eqnarray*}
\min\limits_{\varnothing \subset S \subset V}  \frac{1}{|S|}+\frac{1}{|\bar{S}|}+C(S,\bar{S}).
\end{eqnarray*}
Part 1)  We first show that {\sf balanced cut} is reducible to {\sf min inverse set size cut}: Given an instance of balanced cut defined on $G=(V,E)$  with edge weights $w_{ij}$.  Define a new, scaled graph, in which the edge weights are $w'_{ij}=\frac{w_{ij}}{M}$, for some large number $M$.  A suitable choice of $M$ is $M=w_{\max} n^4$, where $w_{\max}=\max _{[i,j]\in E} w_{ij}$.   We note that the minimum cut partition in a scaled graph is the same as the minimum cut partition as in the original graph, and the capacity of the scaled cut is $\frac{1}{M}$ times the capacity of the cut in the original graph.
The {\sf min inverse set size cut} on this scaled graph is:
\begin{eqnarray*}
\min\limits_{\varnothing \subset S \subset V}  \frac{1}{|S|}+\frac{1}{|\bar{S}|}+\frac{C(S,\bar{S})}M.
\end{eqnarray*}
Since there are at most $O(n^2)$ arcs in a cut, then for our choice of the value of $M$, $\frac{C(S,\bar{S})}M$ is at most $O(\frac{1}{n^2})$. Consequently the first two terms dominate the cut value term,
$\frac{1}{|S|}+\frac{1}{|\bar{S}|} \gg \frac{C(S,\bar{S})}M$ by a factor of $O(n)$.  Thus the
optimal solution will necessarily minimize:
\begin{eqnarray*}
\min\limits_{\varnothing \subset S \subset V}  \frac{1}{|S|}+\frac{1}{|\bar{S}|},
\end{eqnarray*}
for which the minimum is attained for $|S|=|\bar{S}|=\frac n2$.  Among the solutions that minimize the first two terms, the {\sf min inverse set
size cut} problem minimizes the term $\frac{C(S,\bar{S})}M$, resulting in a minimum balanced cut.   Thus, {\sf min inverse set size cut} problem is NP-complete.
\begin{figure}[ht]
  \begin{center}
\subfigure[Original graph]{\epsfig{figure=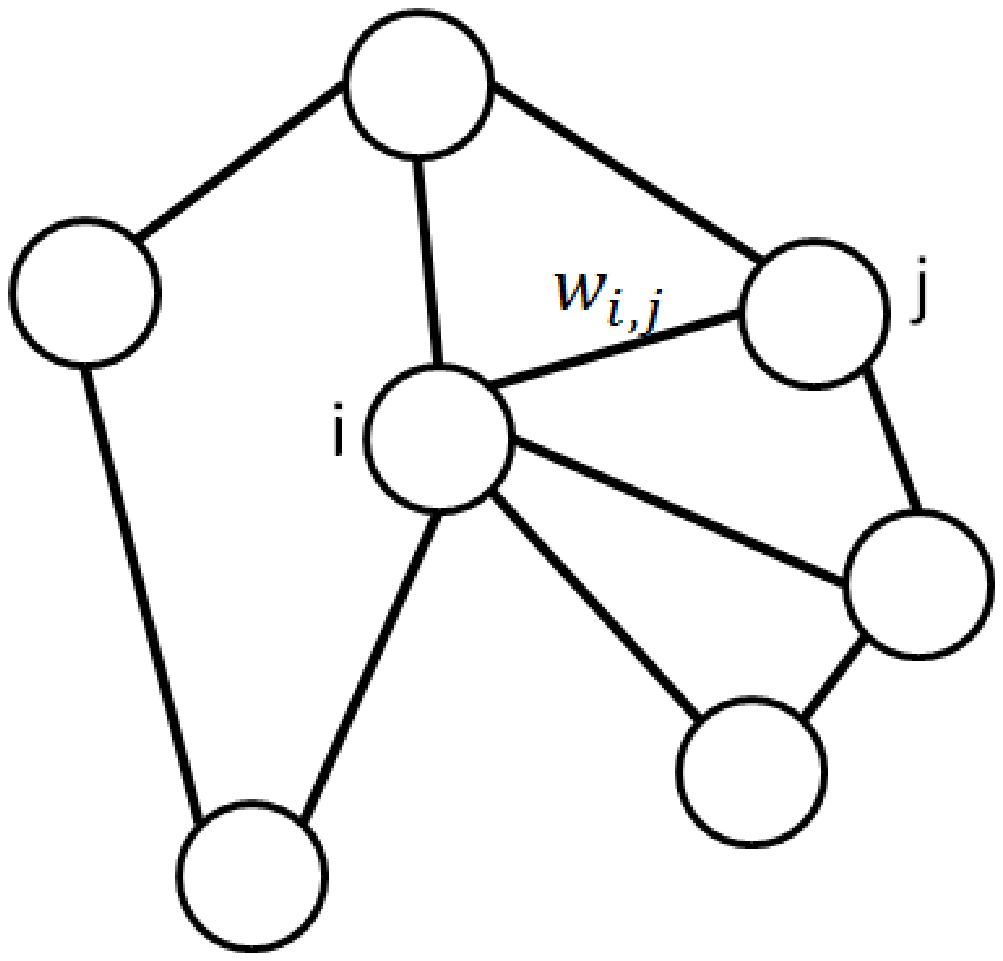, width =1.8in}}
\quad \quad \subfigure[New graph]{\epsfig{figure=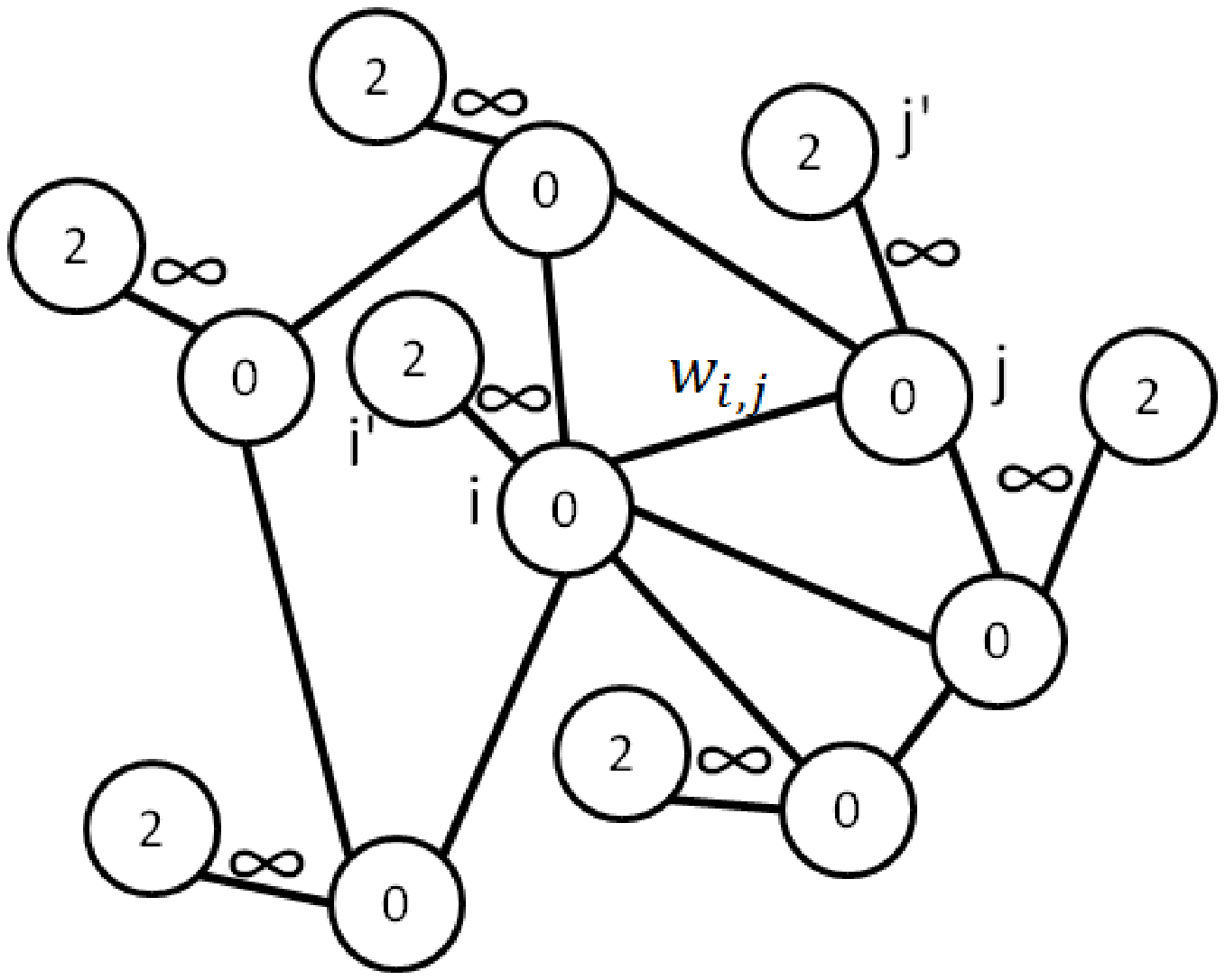, width = 2.2in}}
      \end{center}
  \caption{The original graph and the new graph in part 2's reduction.  Numbers inside the nodes of
  the new graph are the values of the respective elements.}\label{nppic}
\end{figure}

Part 2)  We now demonstrate that {\sf min inverse set size cut} is reducible to {\sf min normalized range cut}:
For a given problem instance of the {\sf min inverse set size cut} problem with $n$
nodes and $m$ edges, we construct a new graph with $2n$ nodes and $m+n$ edges:
Each original node $v$ of $V$ is connected to
a new node $v'$ with an edge of capacity $\infty$.  All original nodes are assigned
a value of $0$ and all new nodes are assigned a value of $2$.  See Figure
\ref{nppic}.
The presence of edges of infinite capacity guarantees that
the range of both $S$ and $\bar{S}$ for any finite cut partition is exactly two, as otherwise the cut would have to
have an infinite value.  Also, for any finite cut partition in the original graph, the corresponding partition in the new graph has double the cardinality of $S$ and $\bar{S}$ and the same cut capacity. Thus, the solution to this new problem using {\sf min normalized range cut} is
exactly the solution to the original instance of the {\sf min inverse set size
cut} problem.
\end{proof}

\section{Range segmentation in $k$-partitions}\label{k-seg}

In this section the bipartition results are extended to $k$-partitions. Specifically, we devise polynomial time algorithms for {\sf min k range sum}, {\sf min max k range}, {\sf min k-normalized range sum} and show that  {\sf min k-normalized range cut} is NP-complete.  For any fixed value of $k$, the {\sf min k range cut} is shown here to be polynomial time solvable, as is the case for $k=2$.  But for arbitrary value of $k$ we prove here that the {\sf min k range cut} is NP-hard.  This follows since the problem generalizes the  min $k$-cut which is NP-hard for arbitrary $k$, \cite{GH}.
\subsection{Min $k$ range sum problem}\label{prob1k}

The {\sf min $k$ range sum problem} is to find a partition $(S_1,...,S_k)$, of $V$, that minimizes $
\min \ \sum_{i=1}^k range(S_i)$.
This problem can be solved in linear time with Algorithm \ref{alg:mkrs} as shown next.

\begin{algorithm}[h!]

  \caption{Min $k$ Range Sum Algorithm}
  \label{alg:mkrs}
  \begin{algorithmic}[1]
   \STATE \textbf{INPUT:} An ordered array $\{\alpha_1,...,\alpha_n\}$
          \STATE \textbf{OUTPUT:}$(S^*_1,S^*_2,\ldots,S^*_k)$, such that $\cup _{i=1}^k \{ S^*_i\}= V$ and the objective value  $z^*$.
    \STATE \textbf{begin:}



\FOR{$i= 1,\ldots,n-1$}

\STATE \hspace{0.5cm}$g_i := \alpha_{i+1}-\alpha_i$

\ENDFOR

\STATE Let {$i_1,\ldots,i_{k-1}$} be the indices of the largest $k-1$ elements of $\{g_i\}_{i=1}^n$.

\STATE  $(S^*_1,S^*_2,\ldots,S^*_k):= (\{\alpha_1, \ldots,\alpha_{i_1}\},\{\alpha_{i_1+1},\ldots,\alpha_{i_2}\},\ldots,\{\alpha_{i_{k-1}+1},\ldots,\alpha_{n}\} )$.

\STATE $z^* = \sum_{j=1}^k {range(S^*_j)}=\sum_{j=1}^k (\alpha_{i_j}-\alpha_{i_{j-1}+1})$, for $\alpha_{{i_0}+1}=\alpha_1$.

 \STATE \textbf{end} 

\end{algorithmic}
\end{algorithm}

\begin{prop} 
Algorithm \ref{alg:mkrs} solves the {\sf min k range sum} problem in $O(n)$ steps. \end{prop}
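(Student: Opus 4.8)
The plan is to recast the {\sf min $k$ range sum} objective entirely in terms of the $n-1$ consecutive gaps $g_p=\alpha_{p+1}-\alpha_p$, and to show that the best any $k$-partition can do is to ``delete'' the $k-1$ largest of these gaps, which is precisely what Algorithm~\ref{alg:mkrs} does. The first step is the observation that for any non-empty $B\subseteq V$ one has $range(B)=\sum_{p} g_p$, where the sum ranges over all indices $p$ with $min(B)\le\alpha_p$ and $\alpha_{p+1}\le max(B)$; that is, $range(B)$ is the total length of the gaps lying inside the interval $[min(B),max(B)]$. Calling a gap $g_p$ \emph{covered} by cluster $S_i$ when $[\alpha_p,\alpha_{p+1}]\subseteq[min(S_i),max(S_i)]$, and letting $c_p$ be the number of clusters covering $g_p$, summing this identity over the $k$ clusters yields $\sum_{i=1}^k range(S_i)=\sum_{p=1}^{n-1} c_p\,g_p$.

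The crux of the argument is a counting lemma: in every partition of $V$ into $k$ non-empty clusters, at most $k-1$ gaps are \emph{uncovered} (have $c_p=0$). Indeed, if $g_p$ is uncovered then no cluster simultaneously owns an element of value $\le\alpha_p$ and an element of value $\ge\alpha_{p+1}$, so each cluster lies entirely in $\{v_1,\ldots,v_p\}$ or entirely in $\{v_{p+1},\ldots,v_n\}$; hence the prefix $\{v_1,\ldots,v_p\}$ is a union of whole clusters. If $p_1<\ldots<p_t$ are the uncovered-gap indices, this shows each initial segment $\{v_1,\ldots,v_{p_j}\}$ is a union of whole clusters, so the blocks $\{v_1,\ldots,v_{p_1}\},\ \{v_{p_1+1},\ldots,v_{p_2}\},\ \ldots,\ \{v_{p_t+1},\ldots,v_n\}$ are $t+1$ non-empty sets, each a union of whole clusters, that partition $V$; therefore $t+1\le k$. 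This is the $k$-cluster incarnation of the non-overlapping-interval exchange arguments of Lemmas~\ref{lem:maxS} and~\ref{lem:genoverlap2}.

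Combining the two steps, for every $k$-partition
\[
\sum_{i=1}^k range(S_i)=\sum_{p=1}^{n-1} c_p\,g_p \;\ge\; \sum_{p:\,c_p\ge 1} g_p \;=\;(\alpha_n-\alpha_1)-\sum_{p:\,c_p=0} g_p \;\ge\;(\alpha_n-\alpha_1)-G_{k-1},
\]
where $G_{k-1}$ is the sum of the $k-1$ largest gaps and the last inequality uses that at most $k-1$ gaps are uncovered. The partition returned by Algorithm~\ref{alg:mkrs} cuts $V$ into $k$ contiguous intervals at the $k-1$ largest gaps; in it every non-chosen gap is covered by exactly one cluster and each chosen gap by none, so its objective value is exactly $(\alpha_n-\alpha_1)-G_{k-1}=\sum_{j=1}^k(\alpha_{i_j}-\alpha_{i_{j-1}+1})$. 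Hence the algorithm is optimal and returns the stated value.

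For the running time, computing all $g_p$ takes $O(n)$; the indices of the $k-1$ largest gap values are found in $O(n)$ by a worst-case linear-time selection procedure (no sorting needed), and one final left-to-right pass over $1,\ldots,n-1$ emits these cut indices in increasing order, producing $(S^*_1,\ldots,S^*_k)$ in $O(n)$ time. (We assume $k\le n$, as otherwise the instance is infeasible, while for $k=n$ every cluster is a singleton of range $0$.) Thus Algorithm~\ref{alg:mkrs} runs in $O(n)$ steps. I expect the one genuinely delicate point to be the counting lemma --- specifically verifying that the uncovered gaps carve $V$ into unions of \emph{whole} clusters, which is what pins the count at $k-1$.
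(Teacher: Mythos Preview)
Your proof is correct, and it takes a genuinely different route from the paper's. The paper first argues (by generalizing Lemmas~\ref{lem:n_in_S_bar} and~\ref{lem:maxS} from $k=2$) that there is an optimal $k$-partition in which the clusters are non-overlapping contiguous intervals on the line, and then observes that among such interval partitions the objective is minimized by placing the $k-1$ breakpoints at the $k-1$ largest gaps; linear-time selection (Blum et al.) gives the $O(n)$ bound. Your argument instead bypasses the structural characterization of the optimum entirely: you rewrite the objective as $\sum_p c_p g_p$, prove directly via the counting lemma that at most $k-1$ gaps can be uncovered in \emph{any} $k$-partition (overlapping or not), and thereby obtain the lower bound $(\alpha_n-\alpha_1)-G_{k-1}$, which the algorithm's output attains.

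What each approach buys: the paper's argument is shorter once the $k=2$ lemmas are in hand, and it simultaneously establishes the interval structure of optima, which is reused for other problems in Section~\ref{k-seg}. Your argument is more self-contained and yields the closed-form optimal value $(\alpha_n-\alpha_1)-G_{k-1}$ directly; it also handles arbitrary (possibly overlapping) partitions in one stroke without an exchange step. The counting lemma you flagged as delicate is indeed the heart of your proof, and your justification---each uncovered gap splits $V$ into two unions of whole clusters, so $t$ uncovered gaps yield $t+1$ non-empty blocks each containing at least one full cluster---is sound.
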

\begin{proof}
In an optimal solution $v_1 \in S_1$ and $v_n \in S_k$, and the sets $(S_1,\ldots,S_k)$ are non-overlapping. These facts follow from the same arguments used in Lemma \ref{lem:n_in_S_bar} and Lemma \ref{lem:maxS} and is omitted for brevity. It remains to determine the boundaries between the non-overlapping segments on the real line. This is equivalent to identifying the $k-1$ largest gaps between consecutive values of $\alpha$.  This can be done by finding first the $(k-1)^{st}$ median in the set of $n-1$ gaps, in linear time $O(n)$, using the algorithm of Blum et al.\ \cite{blum-median}. Once this median is found, the set of gaps is scanned once to mark all the gaps that have value greater or equal to that of the $(k-1)^{st}$ median.  This produces the $k-1$ largest gaps in linear time as shown in Algorithm \ref{alg:mkrs}.  These gaps separate the sets in the partition with the smallest sum of ranges.
\end{proof}

\subsection{Min max k range}
The {\sf min max k range} problem for $k>2$ is solved differently and with significantly higher complexity than the case of $k=2$ (in Section \ref{minmax}).  Recall that the {\sf min max k range} is to find a partition $(S_1,...,S_k)$ of $V$, that minimizes $\min\limits_{(S_1,\ldots,S_k)}\left( \max_{i\in \{1..k\}} range(S_i)\right)$.

As before, it is established, with the same arguments as in Lemma \ref{lem:n_in_S_bar} and Lemma \ref{lem:maxS}, that there is an optimal solution in which $v_1 \in S_1$ and $v_n \in S_k$, and that the intervals containing the sets $(S_1,\ldots,S_k)$ are non-overlapping. Specifically, each set $S_j$ in an optimal partition contains consecutive elements and is of the form $\{ i_{j-1}+1,\ldots , i_j\}$ with range $\alpha _{i_j}-\alpha _{i_{j-1}+1}$.

%
%

\begin{prop} Algorithm  \ref{alg:mmkg} solves the {\sf min max k range} problem in $O(n\log^3 n)$ steps. \end{prop}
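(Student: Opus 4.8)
The plan is to use the structural facts just stated to collapse the problem to a one-dimensional search, and then to carry out that search efficiently despite its search space having quadratic size.

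First I would record what the preceding paragraph gives: an optimal partition consists of $k$ consecutive blocks $S_j=\{i_{j-1}+1,\dots,i_j\}$ with $0=i_0<i_1<\dots<i_k=n$, so the objective value of such a partition equals $\alpha_{b}-\alpha_{a}$ for the pair $(a,b)=(i_{j-1}+1,i_j)$ attaining the bottleneck block. Consequently the optimum lies in the set $\mathcal{D}=\{\,\alpha_{b}-\alpha_{a}\mid 1\le a\le b\le n\,\}$ of pairwise differences. Call a threshold $r$ \emph{feasible} if $V$ admits a partition into at most $k$ consecutive blocks each of range at most $r$. Feasibility is decided by a greedy left-to-right sweep: start the current block at position $1$; extend it to the largest index $j$ with $\alpha_{j}\le\alpha_{a}+r$ (one binary search in the sorted array, where $a$ is the current block's start), then open the next block at $j+1$, and report failure as soon as more than $k$ blocks have been opened. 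This costs $O(\min\{n,\,k\log n\})$ time, and — the property that makes a search possible — feasibility is monotone: if $r$ is feasible then so is every $r'\ge r$. Hence the problem reduces to finding the smallest feasible element of $\mathcal{D}$, after which one last greedy sweep with that value outputs $(S_1,\dots,S_k)$.

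The main obstacle is that $|\mathcal{D}|=\Theta(n^{2})$, so $\mathcal{D}$ cannot be sorted within the target running time; one must search it using only its structure. I would view $\mathcal{D}$ as the entries of the matrix $D$ with $D[a][b]=\alpha_{b}-\alpha_{a}$: since $\alpha_{1}<\dots<\alpha_{n}$, $D$ is nondecreasing along each row (increasing $b$) and nonincreasing down each column (increasing $a$). Searching for the smallest entry of such a doubly sorted matrix that satisfies a monotone predicate is done by a Frederickson--Johnson-type elimination. Concretely, maintain for each row $a$ an index window $[\ell_{a},u_{a}]$ guaranteed to bracket that row's smallest feasible entry (initialized to $[a,n]$); in each phase pick the weighted median $\mu$ of the midpoint values $\alpha_{\lfloor(\ell_{a}+u_{a})/2\rfloor}-\alpha_{a}$, weighted by window sizes, call the feasibility oracle on $\mu$, and accordingly halve the window in each row whose midpoint value the oracle has decided, using weak inequalities throughout so that an entry equal to $\mu$ is never discarded. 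By the weighted-median property each phase removes a constant fraction of the surviving candidate entries, so after $O(\log n)$ phases the survivors number $O(1)$ and the smallest feasible one is the optimum.

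The running time is then obtained by combining the $O(\log n)$ elimination phases, the per-phase work of locating and evaluating the weighted median over the active rows (with the attendant bookkeeping and oracle probes), and the cost of a single feasibility test implemented with per-block binary searches, which multiply out to $O(n\log^{3}n)$. I expect the delicate part of the write-up to be precisely this accounting: certifying that a constant fraction of candidates is eliminated in every phase (so that $O(\log n)$ phases suffice) and bounding the per-phase bookkeeping, rather than anything about the reduction itself, which follows routinely from the non-overlap and endpoint arguments invoked above.
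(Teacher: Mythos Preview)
Your reduction to a monotone feasibility search over $\mathcal{D}$ and the greedy feasibility test match the paper exactly. Where you diverge is in how you search $\mathcal{D}$ without materializing it. The paper does not build the sorted-matrix picture; instead it observes that the elements of $\mathcal{D}$ are precisely the lengths of all simple paths in the path graph on $\{1,\dots,n\}$ with edge weights $\alpha_{i+1}-\alpha_i$, and then invokes the Megiddo--Tamir--Zemel--Chandrasekaran algorithm, which returns the $m$th longest path in a tree in $O(n\log^2 n)$ time. A binary search on the rank $m\in[1,\binom{n}{2}]$ then costs $O(\log n)$ calls to this selection oracle plus $O(\log n)$ feasibility tests, yielding $O(n\log^3 n)$ with essentially no further analysis.

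Your Frederickson--Johnson style elimination over the doubly monotone matrix $D[a][b]=\alpha_b-\alpha_a$ is a genuinely different and equally valid way to do the same search, and with careful accounting it meets (and can in fact beat) the stated bound. The trade-off is exactly the one you flag at the end: you must actually carry out the constant-fraction-elimination argument and the per-phase bookkeeping to certify the running time, whereas the paper offloads all of that to the cited MTZH result as a black box. Both routes are correct; the paper's is shorter to state, yours is more self-contained and more transparent about where the logarithmic factors originate.
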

\begin{proof}
The algorithm works by guessing one value for the max range at a time, and then conducting a {\em feasibility check} to verify whether there is a feasible solution for that value.  The smallest value of a guessed range for which there is a feasible solution is the optimal range value.  A natural way of implementing such an algorithm is by using binary search on the $n\choose2$ possible range values.  Each possible range value is of the form $\alpha _q -\alpha _p$ corresponding to a pair $p,q$ such that $1\leq p <q\leq n$.  Let the set of all possible range values be $\mathcal{D}=\{\alpha_q-\alpha_p| q>p, p=1,\ldots,n-1, q=p+1, \ldots, n\}$.  Note that we disregard the trivial case where the optimal solution is $0$.  The trivial case happens when the number of distinct scalar values is at most $k$.
Since there are up to $n\choose2$ possible range values, these can be sorted, in $O(n^2 \log n)$ time (note that $\log {n\choose2}$ is $O(\log n)$). Let the sorted values in $\mathcal{D}$ be $d_1 \geq d_2 \ldots \geq d_{n\choose2}$.

Consider the feasibility check for a given guessed value for the max range, $z$.  To verify feasibility we first scan the values of $\alpha _1, \alpha _2 \ldots \alpha _n$, for the largest index $j_1$ so that  $\alpha _{j_1} - \alpha _1 \leq z$. If $\alpha _{2} - \alpha _1 > z$ then $j_1=1$.  The interval $I_1=[1,\alpha _{j_1}]$ is then the first of up to $k$ intervals representing the $k$-partition of the set of values $\alpha _1, \alpha _2 \ldots \alpha _n$.  Next we scan the values of $\alpha _{j_1+1}, \alpha _{j_1+2} \ldots $, for the largest index $j_2$ such that  $\alpha _{j_2} - \alpha _{j_1 +1} \leq z$.   Again, if $\alpha _{j_1+1}-\alpha _{j_1}>z$, then $j_2=j_1$, corresponding to an interval that contains only one element, which is of range $0$.  The interval $I_2$ is then equal to $[\alpha _{j_1 +1},\alpha _{j_2}]$.  This is repeated up to $k$ times or until the last value $\alpha _n$ is reached. If after $k$ repetitions $j_k <n$ then the guessed value $z$ is not feasible and therefore the optimal range value must be greater.  Otherwise the guessed value is feasible and the optimal range value can only be smaller than $z$.   This feasibility check runs in $O(n)$ steps as it scans the values of $\alpha$ at most once each.

We comment that there is an alternative feasibility check on the guessed value $z$ that runs in $O(k\log n)$ steps.  For $\alpha _{j_i}$, the start of the $i$th interval,  we search, using binary search on the set of $\alpha$ values, for the next value of $\alpha$, $\alpha _{j_{i +1}}$, which is the largest while the difference from the current value of $\alpha$ is still less than $z$.  Each such search requires $O(\log n)$ steps, and since there are up to $k$ such intervals the total complexity is $O(k\log n)$. This complexity is faster than $O(n)$ if $k=o ({\frac{n}{\log n}})$, but even then it does not improve the overall complexity since the other steps dominate it, as discussed next.

The optimal  solution to the min max k range problem can then be found using binary search on the sorted sequence  $\mathcal{D}$.  This requires $O(\log n)$ calls for feasibility check for a total complexity of $O(n \log n)$ which is dominated by the complexity of the sorting of $\mathcal{D}$, $O(n^2 \log n)$.   Next we show that the need to sort the distances in $\mathcal{D}$ can be avoided, resulting in substantial speed-up from $O(n^2 \log n)$ down to $O(n\log^3 n)$.

\begin{algorithm}[t!]

  \caption{Min max $k$ range}
  \label{alg:mmkg}
  \begin{algorithmic}[1]
   \STATE \textbf{INPUT:} An ordered array $\boldalpha =\{\alpha_1,...,\alpha_n\}$,  in increasing order. 
       \STATE \textbf{OUTPUT:} $(S^*_1,S^*_2,\ldots,S^*_k)$, and the corresponding objective value $z^*$.

       \STATE \textbf{begin:}

  \STATE $a_{\min} := 1; a_{\max}:= {n\choose2}$;
      \WHILE {$a_{\min} < a_{\max}$}
       \STATE\hspace{0.5cm} $I_i := \emptyset, \forall i=1,\ldots,k$;  
        \STATE \hspace{0.5cm} $m:=  \lfloor {{1\over 2}(a_{\min}+a_{\max})}\rfloor $;
        \STATE  \hspace{0.5cm} $z:=$ $m$th largest value in the set $\mathcal{D}=\{\alpha_q-\alpha_p| q>p, p=1,\ldots,n-1, q=p, \ldots, n\}$; \\
 \hspace{0.5cm}  (determined by the M-algorithm);


     \hspace{0.5cm}   \texttt{\small{\%}Checking whether a feasible solution exists with value $\leq z$:}
        \STATE\hspace{0.5cm} $\min(I_1):=\alpha_1$;
       \STATE \hspace{0.5cm} $\max(I_1) :=$ the largest $\alpha_j$, $j\geq 1$, such that $\alpha_j - \alpha_1 \leq z$;

\STATE\hspace{0.5cm} \textbf{for~}{$i= 2,\ldots,k$}~\textbf{until~}{$\max I_{i-1}=\alpha _n$}~\textbf{do}

 \STATE  \hspace{1cm} $\min(I_i):=$ the smallest $\alpha_j$, such that $\alpha_j > \max(I_{i-1})$;
 \STATE \hspace{1cm} $\max(I_i):=$ the largest $\alpha_j$, such that $\alpha _j \leq \min(I_{i})+z$;

\STATE\hspace{0.5cm} \textbf{end for}
\STATE\hspace{0.5cm} \textbf{if~} {$\alpha_n \in \cup_{i=1}^k\{I_i\}$}~\textbf{then}
\STATE \hspace{1cm} $a_{\max}:=m$;
\STATE\hspace{0.5cm} \textbf{else}
\STATE\hspace{1cm} $a_{\min}:=m+1$;
\STATE\hspace{0.5cm} \textbf{end if}
\ENDWHILE
\\
\texttt{ \small{\%}Computing the output:}
\STATE  $z^*:=z$;
\FOR {$i = 1 \ldots k$}
\STATE\hspace{0.5cm} $S^*_i := \cup_{v_j \in I_i} \{v_j\}$;
\ENDFOR
 \STATE \textbf{end}
 \end{algorithmic}
\end{algorithm}

Megiddo et al.\ \cite{MTZH} devised an efficient algorithm, called here the M-algorithm, for finding the $i$th longest path among the set of all simple paths in a tree with edge weights.  For a tree on $n$ nodes the complexity of the M-algorithm is $O(n \log^2 n)$.  Note that the number of different simple paths in a tree is $O(n^2)$, since each simple path is uniquely characterized by its pair of endpoints.
Consider a path graph on nodes $\{ 1,2,\ldots ,n\}$, where all edges are of the form $[i,i+1]$ with weights $\alpha _{i+1}-\alpha _i$ for $i=1,\ldots ,n-1$.  A path graph is obviously a tree and thus the M-algorithm  is applicable to this path graph.  The distance between node $p$ and node $q$, for $p<q$ is then $\alpha_q-\alpha_p$.   Now, instead of sorting the distances in $\mathcal{D}$, we use the M-algorithm to identify the $i$th longest of the potentially feasible ranges which is the value $z$ to be checked for feasibility.  Since each call and feasibility check reduces the number of potentially feasible ranges by a factor of $2$, the total number of calls is $\log {n\choose2}$, which is $O(\log n)$.

Initially the interval of integer indices $[1,{n\choose2}]$ contains the list of the index positions of all the potentially feasible ranges.  At each iteration we find the median value in this interval of indices, without having the sorting available, by calling the M-algorithm. If this median range value is feasible, then we conclude that the min max feasible range can be only smaller and thus resides in the list of indices smaller or equal to the median.  Otherwise it resides in the list of larger indices.  Initially the endpoints of the interval of indices are  $a_{\min}=1$ and $a_{\max}={n\choose2}$.  At each iteration the M-algorithm finds the median range value $z$ in the interval, which is the $\lfloor {{1\over 2}(a_{\min}+a_{\max})}\rfloor $th longest in the original list.   If $z$ is feasible then $a_{\max}$ is updated to be equal to this median index, otherwise $a_{\min}$ is updated to be equal to this median index plus $1$.  The length of the interval is hence reduced by a factor of $2$ at each iteration, thus requiring at most $\log {n\choose2}$ iterations of the procedure. 

At each of the $O(\log n)$ calls for the guessed value of the range there is one call for finding the $\lfloor {{1\over 2}(a_{\min}+a_{\max})}\rfloor $th longest value $z$ and one call for feasibility check of $z$.  The first requires $O(n \log^2 n)$ steps and the second requires $O(n)$ steps.  The total complexity is then $O(\log n \cdot (n \log^2 n  +n))$. Thus, the optimal solution to min max $k$ range problem is computed in time $O(n\log^3 n)$. \end{proof}

The pseudocode of the algorithm solving min max $k$ range is given as Algorithm \ref{alg:mmkg}.

\subsection{Min k-range cut}
Recall that the {\sf min k-range cut} problem is,
$\min_{(S_1,\ldots,S_k)}  \ \sum_{i=1}^k range(S_i) + \sum_{i=1}^{k-1} \sum_{j=i+1}^k C(S_i,S_j).$
For the range-cut (2-range cut) problem we reduced the problem to calls, for each configuration of interval partitioning, to a min cut procedure (Section \ref{Prob2}).   The idea here is analogous, exploiting the use of the polynomial time minimum $k$-cut algorithm for $k$ fixed of \cite{GH}.  Firstly we note that for arbitrary $k$ the NP-hardness of the minimum $k$-cut problem, \cite{GH}, implies that {\sf min k-range cut} is NP-hard.  This is easy to show by selecting the range of the values to be very small, by a factor of $n$ at least, than the smallest weights of the graph.  And then the value of any solution is dominated by the value of the $k$-cut partition.

The $k$-cut problem was shown to be solved in polynomial time for fixed $k$ with the algorithm of Goldschmidt and Hochbaum \cite{GH} (GH-algorithm).  The GH-algorithm involves $O(n^ {O(k^2)})$ calls to a min $s,t$-cut procedure.  For $k$ not fixed the problem was shown to be NP-hard. The way the algorithm works is by guessing a set of ``seeds" that must belong to one set in the partition, and a set of ``seeds" that belong to the other $k-1$ sets.  The algorithm calls for a respective min $s,t$-cut for the seeds in the set shrunk into the source set $s$, and the seeds for the other sets shrunk into the sink node $t$.  The resulting source set is then considered to be one set in the $k$-partition and the process then continues, recursively, on the subgraph induced by the sink set, $k-1$ additional times, resulting in a $k$-partition.  It was shown in \cite{GH} that it is sufficient to select seed sets that contain at most $O(k)$ seeds, and thus enumerating all of them takes time that is polynomial for fixed $k$.

Our algorithm for  {\sf min k range cut} is a generalization of the min range cut algorithm, for $k=2$, and similarly works in two steps.  In the first step 
select all possible feasible collections of $k$ intervals, each corresponding to one set $S_i$, determined by the endpoints $[min(S_i),max(S_i)]$ (possibly $min(S_i)=max(S_i)$).  Since the sets $S_i$ form a partition, the endpoints of the respective intervals are distinct.  Hence there are up to $2k-2$ distinct endpoints, and respectively up to $O(k n^ {2k-2})$ interval configurations.  A $k$-intervals configuration is feasible, if all values of $\alpha$ are contained in the union of the intervals. For each interval configuration that is feasible we let the seeds for the i$^{th}$ set in the partition $S_i$ include $min(S_i)$ and $max(S_i)$ and all the nodes that correspond to values in the interval $(min(S_i),max(S_i))$ that are not in any of the other $k-1$ intervals.  The resulting set of seeds is then augmented, if necessary, by the GH-algorithm, which is otherwise employed without change.

The total complexity involves then  $O(k n^ {2k-2})$ calls to the $k$-cut algorithm, each of complexity $O(n^ {O(k^2)}T(m,n))$.  The total complexity of solving {\sf min k-range cut} problem is hence $O(n^ {O(k^2)}T(m,n))$.
%

\subsection{Min k-normalized range sum} \label{sec:prob2k}

The {\sf min k-normalized range sum} problem  is to find a $k$-partition, $(S_1,\ldots,S_k)$, to achieve the following objective:
\begin{eqnarray*}
Q(n,k)=\min\limits_{(S_1,\ldots,S_k)}\ \ \sum_{i=1}^k \frac{range(S_i)}{f(|S_i|)}
\end{eqnarray*}
We next present a polynomial time dynamic programming algorithm for the problem:
 \begin{prop} The {\sf min k-normalized range sum} problem is solvable in polynomial time $O(n^2k)$. \end{prop}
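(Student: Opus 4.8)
The key structural fact, established earlier for the $2$-partition case and extended to $k$ in Section \ref{prob1k} and the {\sf min max k range} section, is that there is an optimal solution in which every cluster $S_i$ consists of a block of consecutive elements $\{v_{a+1},\dots,v_b\}$ with respect to the sorted order $\alpha_1<\cdots<\alpha_n$, and these blocks tile $V$ in order. (The argument is the same no-overlap swapping argument as in Lemma \ref{lem:genoverlap2}: moving elements to make the intervals disjoint only decreases each $range(S_i)$ while leaving the sizes $|S_i|$, hence the denominators $f(|S_i|)$, unchanged, so the objective cannot increase.) So the problem reduces to choosing $k-1$ cut points $0=i_0<i_1<\cdots<i_k=n$, assigning to cluster $j$ the block from position $i_{j-1}+1$ to $i_j$, with contribution $\frac{\alpha_{i_j}-\alpha_{i_{j-1}+1}}{f(i_j-i_{j-1})}$.

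First I would record this reduction as a lemma (citing that its proof is the same swapping argument used before, omitted for brevity). Next I would set up the dynamic program. Define $Q(j,t)$ to be the minimum value of $\sum_{i=1}^{j}\frac{range(S_i)}{f(|S_i|)}$ over all partitions of the first $t$ elements $\{v_1,\dots,v_t\}$ into $j$ consecutive nonempty blocks. The recurrence is
\begin{eqnarray*}
Q(j,t)=\min_{j-1\le r\le t-1}\ \ Q(j-1,r)+\frac{\alpha_{t}-\alpha_{r+1}}{f(t-r)},
\end{eqnarray*}
with base case $Q(1,t)=\frac{\alpha_t-\alpha_1}{f(t)}$ for $t\ge 1$, and the answer is $Q(n,k)=Q(k,n)$. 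I would briefly note correctness: an optimal consecutive-block partition of $\{v_1,\dots,v_t\}$ into $j$ blocks has its last block equal to $\{v_{r+1},\dots,v_t\}$ for some valid split point $r$, and the remaining $j-1$ blocks form an optimal partition of $\{v_1,\dots,v_r\}$ by the usual cut-and-paste/optimal-substructure argument.

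For the complexity count: there are $O(nk)$ table entries $Q(j,t)$ with $1\le j\le k$, $1\le t\le n$, and each is computed by a minimization over $O(n)$ candidate split points, each candidate costing $O(1)$ arithmetic operations (a subtraction, an evaluation of $f$, a division, a comparison) in the real-computation model used in the paper. This gives $O(n^2k)$ total, and backtracking pointers recover the optimal clusters $S_1^*,\dots,S_k^*$ within the same bound. I expect the only real subtlety — hence the one point I would be careful to spell out — is the reduction to consecutive blocks, since unlike {\sf min k range sum} the denominators $f(|S_i|)$ complicate naive greedy reasoning; but because the swap in Lemma \ref{lem:genoverlap2}-style argument preserves cluster \emph{sizes} exactly, monotonicity of $f$ is not even needed here, and the reduction goes through cleanly. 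Everything else is a routine optimal-substructure dynamic program.
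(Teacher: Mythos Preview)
Your proposal is correct and follows essentially the same approach as the paper: reduce to consecutive blocks via the size-preserving swap argument of Lemma~\ref{lem:genoverlap2}, then solve by the dynamic program $Q(j,t)=\min_{r}\{Q(j-1,r)+(\alpha_t-\alpha_{r+1})/f(t-r)\}$ with $O(nk)$ states and $O(n)$ work per state. Your side remark that monotonicity of $f$ is not actually needed for the consecutive-block reduction (since the swap preserves all cluster sizes exactly) is a nice sharpening not made explicit in the paper.
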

\begin{proof}
  As before, it can be proved that there exists an optimal solution in which  $v_1 \in S_1$ and $v_n \in S_k$, and that the the sets $(S_1,\ldots,S_k)$ are non-overlapping. The proof follows the same argument as in Lemma \ref{lem:genoverlap} and Lemma \ref{lem:genoverlap2} and is omitted.

The value of the objective function, $Q(n,k)$ is the minimum \emph{cost} required to partition $n$ elements into $k$ sets.
For the ordered input elements according to $\{\alpha_1<...<\alpha_n\}$, let $Q(p,j)$, be the minimum cost for a partition of the first $p$ elements of the input array into $j$ sets, for $p\in\{1,\ldots,n\}$ and $j \in \{1,\ldots, k\}$ .  We construct a dynamic programming recursion with the boundary conditions:  $Q(j,j) = 0~ \forall j$,  $Q(p,1)= \frac{\alpha_p - \alpha_1}{f(p)}~ \forall p$, and $Q(i,j) = \infty ~ \forall i<j$, with the latter being infeasible and therefore set to $\infty$. The following recursion is used to calculate $Q(p,j)$ for $p,j > 1$, once the values of $Q(p',j')$ have been evaluated for all $p'<p$ and $j'<j$:
\begin{eqnarray*}
Q(p,j) = \min_{\ell \in \{j-1,..,p-1\}} \left(Q(\ell , j-1) + \frac{\alpha_p - \alpha_{\ell+1}}{f(p -\ell )}\right)
\end{eqnarray*}
The rationale for the recursion is that optimal partitioning of $p$ elements into $j$ sets consists, for some value $\ell$,  of an optimal partitioning of elements $1, \ldots, \ell$ into $j-1$ sets and allocating elements $\ell +1, \ldots, p$ into the $j$th set. Since each recursion evaluation is accomplished in at most $O(p)$ steps, and there are $O(kn)$ function evaluations it follows that the optimal solution $Q(n,k)$ is determined with this dynamic programming procedure in complexity of $O(n^2k)$.
\end{proof}

\subsection{Min k-normalized range cut}

In section \ref{Prob3}, we proved this problem to be NP-complete in a bipartition setting. The following theorem follows by simply noticing that the bipartition problem is a special case of {\sf min k-normalized range cut} problem, with $k=2$.
 \begin{thm} The {\sf min k-normalized range cut} problem is NP-complete. \end{thm}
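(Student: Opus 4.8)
The plan is to derive NP-completeness directly from the bipartition result already established. First I would check membership in NP: given a candidate $k$-partition $(S_1,\ldots,S_k)$ of $V$, one can compute each $range(S_i)$, each $|S_i|$, and each cut value $C(S_i,S_j)$ in polynomial time, hence evaluate $\sum_{i=1}^k \frac{range(S_i)}{|S_i|} + \sum_{i=1}^{k-1}\sum_{j=i+1}^k C(S_i,S_j)$ and compare it against a target value. So the decision version of {\sf min k-normalized range cut} lies in NP, and we will not belabor this point further.

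For the hardness direction I would exhibit a (trivial) polynomial-time reduction from {\sf min normalized range cut}, which was proved NP-complete in Section \ref{Prob3} with the choice $f(|S|)=|S|$. The key observation is that {\sf min normalized range cut}, whose objective is $\frac{range(S)}{|S|}+\frac{range(\bar S)}{|\bar S|}+C(S,\bar S)$ over bipartitions $\varnothing\subset S\subset V$, is \emph{syntactically} the restriction of {\sf min k-normalized range cut} to $k=2$ with $f(\cdot)=|\cdot|$: under the identification $(S_1,S_2)=(S,\bar S)$ the two objective functions are identical term by term, and the bipartition feasibility constraint $\varnothing\subset S\subset V$ is exactly the requirement that $S_1$ and $S_2$ both be non-empty. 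Therefore the identity map, which sends an instance $(G,\{w_{ij}\},\{\alpha_i\},t)$ of the bipartition problem to the same data interpreted as an instance of the $k$-partition problem with $k=2$, is a polynomial-time reduction preserving optimal values. NP-hardness of {\sf min k-normalized range cut} thus follows immediately from the NP-completeness established in Section \ref{Prob3}, and combined with membership in NP we obtain NP-completeness.

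I do not expect any genuine obstacle; the only point deserving a moment's care is verifying that the Table \ref{tb:k-formulate} formulation of {\sf min k-normalized range cut} specializes at $k=2$ precisely to the objective of the theorem in Section \ref{Prob3}, including the instantiation $f(|S_i|)=|S_i|$ — once this is confirmed the argument is one line. (If instead one wished to fix $k$ at some constant $\geq 3$, the same reduction goes through after padding the input with $k-2$ extra isolated nodes carrying scalar values spread far apart and incident to no edges, which in any finite-value partition must each form their own singleton cluster contributing $0$ to the objective; but under the natural reading where $k$ is part of the problem family, the $k=2$ case already suffices.)
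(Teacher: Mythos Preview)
Your proposal is correct and follows exactly the paper's approach: the paper's entire proof is the observation that the bipartition problem of Section~\ref{Prob3} is the special case $k=2$ of {\sf min k-normalized range cut}, so NP-completeness is inherited immediately. Your write-up is more explicit (spelling out NP membership and the identity reduction, and adding the padding remark for fixed $k\geq 3$), but the substance is identical.
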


\section{Conclusions} \label{conc}
We introduce here a novel criterion in clustering that seeks clusters with limited \emph{range} of values that characterize each cluster's elements.  We present a family of {\em range-based} clustering objective functions based on commonly
considered goals in clustering and demonstrate that, in general, the range-based optimization problems are easier to solve (complexity-wise) than the corresponding total similarity problems.  The proposed objectives could therefore be a viable alternative to existing
clustering criteria, that are NP-hard, offering the advantage of efficient algorithms. Moreover, the range-based problems are meaningful in clustering applications, such as image segmentation, where the diameter, or range of values associated with objects in each cluster,  should be small.

\section*{Acknowledgement}
The author is grateful to John E.\ Baumler for inspiring discussions that initiated the idea of range-clustering, and to Marjan Baghaie for her contributions to the writing of this paper and pointing out reference \cite{MTZH}. This is to thank the referee for detailed and perceptive comments as well as providing reference \cite{phan15}.  The referee's numerous suggestions contributed significantly to improving the presentation and are much appreciated.


\end{document}